
\documentclass[letterpaper, 10 pt, conference]{ieeeconf}  

\IEEEoverridecommandlockouts                              

\overrideIEEEmargins                                      




\usepackage{amsthm}
\usepackage{amsmath,amssymb}
\usepackage{graphicx}
\usepackage{latexsym,url}
\usepackage{graphicx}
\usepackage{comment}
\usepackage[outdir=./]{epstopdf}
\usepackage{stfloats}
\usepackage{color}
\usepackage{xcolor}
\usepackage{comment}
\usepackage{bm}

\makeatletter
\let\MYcaption\@makecaption
\makeatother

\usepackage{subcaption}

\makeatletter
\let\@makecaption\MYcaption
\makeatother

\newtheorem{theorem}{Theorem}

\newtheorem{definition}{Definition}
\newtheorem{corollary}{Corollary}

\newcommand{\SO}{SO}

\newcommand{\mc}{\mathcal}

\newcommand{\R}{{\mathbb R}}

\newcommand{\diag}{{\rm diag}}

\newcommand{\argmin}{\mathop{\rm arg~min}\limits}

\newcommand{\red}[1]{\textcolor{red}{#1}}
\newcommand{\blue}[1]{\textcolor{blue}{#1}}

\title{\LARGE \bf
Collision Avoidance for Elliptical Agents with Control Barrier Function Utilizing Supporting Lines
}

\author{Koju Nishimoto$^{1}$, Riku Funada$^{1}$, Tatsuya Ibuki$^{2}$, and Mitsuji Sampei$^{1}$
\thanks{*This work was supported by JSPS KAKENHI Grant Number 21K20425. To be presented at the 2022 American Control Conference.}
\thanks{$^{1}$ Department of Systems and Control Engineering, Tokyo Institute of Technology
{\tt\small  nishimoto.k.ab@m.titech.ac.jp.} {\tt\small\{funada\},\{sampei\}@sc.e.titech.ac.jp.}
}%
\thanks{$^{2}$ Department of Electronics and Bioinformatics, Meiji University
        {\tt\small ibuki@meiji.ac.jp.}}%
}

\begin{document}

\maketitle
\thispagestyle{empty}
\pagestyle{empty}

\begin{abstract}
This paper presents a collision avoidance method for elliptical agents traveling in a two-dimensional space. 
We first formulate a separation condition for two elliptical agents utilizing a signed distance from a supporting line of an agent to the other agent, which renders a positive value if two ellipses are separated by the line. 
Because this signed distance could yield a shorter length than the actual distance between two ellipses, the supporting line is rotated so that the signed distance from the line to the other ellipse is maximized. 
We prove that this maximization problem renders the signed distance equivalent to the actual distance between two ellipses, hence not causing the conservative evasive motion. 
Then, we propose the collision avoidance method utilizing novel control barrier functions incorporating a gradient-based update law of a supporting line. 
The validity of the proposed methods is evaluated in the simulations.
\end{abstract}

\section{INTRODUCTION} \label{sec:Intro}
Guaranteeing collision avoidance in multi-agent systems is of significant importance to ensuring safety in many application fields, including environmental monitoring \cite{Arslan16, Funada20}, autonomous transportation \cite{Miyano20, Gong17}, robot navigation \cite{Barbosa20}, and precision agriculture \cite{Zhang12}. In these challenging and complex realms, multi-agent systems are demanded to embrace agents of different capabilities, shapes, and sizes to enhance performance \cite{hetero}. In the presence of such heterogeneity, the collision avoidance protocol should guide agents not to collide with each other while incorporating their forms.

To achieve real-time collision avoidance for multi-agent systems, various approaches have been developed, including the methods utilizing artificial potential fields (APFs) \cite{Hoy2015} and control barrier functions (CBFs) \cite{Ames2019_CBF_thapp}. 
APFs were first presented in \cite{Khatib1986}, and have been employed to the multi-agent systems in the context of the formation \cite{Dimarogonas08} and flocking control \cite{Olfari-saber06, Gazi04}, where a repulsive potential function is designed for steering agents not to collide with each other.
A local repulsive function activated only in the sensing regions of each agent is proposed in \cite{Stipanovic07}.
%
On the other hand, CBFs were recently proposed in \cite{Ames2017}, which confines the state of the system in the set defined by the CBFs.
The work \cite{Wang2017} developed the collision avoidance methods for multi robot systems, which can be implemented in a distributed fashion.
The authors in \cite{Glotfelter19} developed the hybrid CBFs to achieve the collision avoidance of the agents with limited sensing ranges.
The comparative study of APFs and CBFs in obstacle avoidance scenarios is conducted in \cite{Singletary2020}. 
Most of the mentioned papers assume the agent as a single point, a circular disk, or a sphere. 
Although these methods can be employed to any agents by overestimating the original shape of agents to a sphere enclosing them, this approach could render too conservative evasive behavior if they have a nonspherical, especially elongated, body.

To mitigate this conservativeness, we propose the collision avoidance method capable of embracing the agents with heterogeneous elliptical shapes, as shown in Fig.~\ref{fig:problem_situation}.
Because the CBF provides the ability to synthesize the number of safety-critical constraints \cite{Glotfelter21}, e.g., the constraints ensuring the battery of the robot never depletes \cite{Gennaro20}, together with the collision avoidance, we opt for a CBF-based approach.
While, in the context of APFs, the work \cite{Do2013} proposes the flocking control for ellipsoidal agents with achieving collision avoidance, the condition used for repulsive potential functions becomes complicated and is not straightforwardly extendable to CBFs.
The work \cite{Dimarogonas2019} employs the result in the computer graphics field to develop the separation condition of elliptical agents. 
However, the physical interpretation of the metric utilized in the collision avoidance law is not readily understandable since it does not provide the distance between agents.
The extent-compatible CBF is developed in \cite{Srinivasan21}, where it can prevent the collision among agents having volume by solving a sum-of-squares optimization (SOS) program. Although this method can be applied to elliptical agents, the computational burden stemming from SOS programs might prevent implementations to high order systems. The work also proposes the sampling-based methods while the designer needs to assume the bounded control input.
The complexities and limitations in the existing collision avoidance methods for elliptical agents partially come from the difficulties in measuring the distance between two separated ellipses.
Although the computational methods deriving numerical solutions of the distance between two ellipses are proposed in 
the computer graphics field \cite{Wang2001,Choi2020}, the analytical solution of the distance is difficult to derive in a simple form.

To alleviate the difficulties of evaluating a distance between two ellipses, we propose a novel CBF that incorporates a signed distance from a supporting line of an elliptical agent to the other agent, as illustrated in Fig.~\ref{fig:ellipse_setting}. Because a naive selection of the supporting line could yield a shorter length than the actual distance between two ellipses, we propose a gradient ascent-based update law, where the supporting line is rotated on the boundary so that the distance between the line and the other ellipse is maximized. We then prove that the maximum value derived from this optimization problem is equivalent to the actual distance between two ellipses. A novel CBF incorporating the gradient ascent input to rotate the supporting line is designed. In addition, we prove that the proposed CBF is a valid one, namely, there always exists the control input to make the collision-free set forward invariance. Numerical simulations demonstrate that the proposed method achieves the collision avoidance between elliptical agents without exhibiting conservative evasive motions.

\begin{figure} 
    \centering
    \includegraphics[width=40mm]{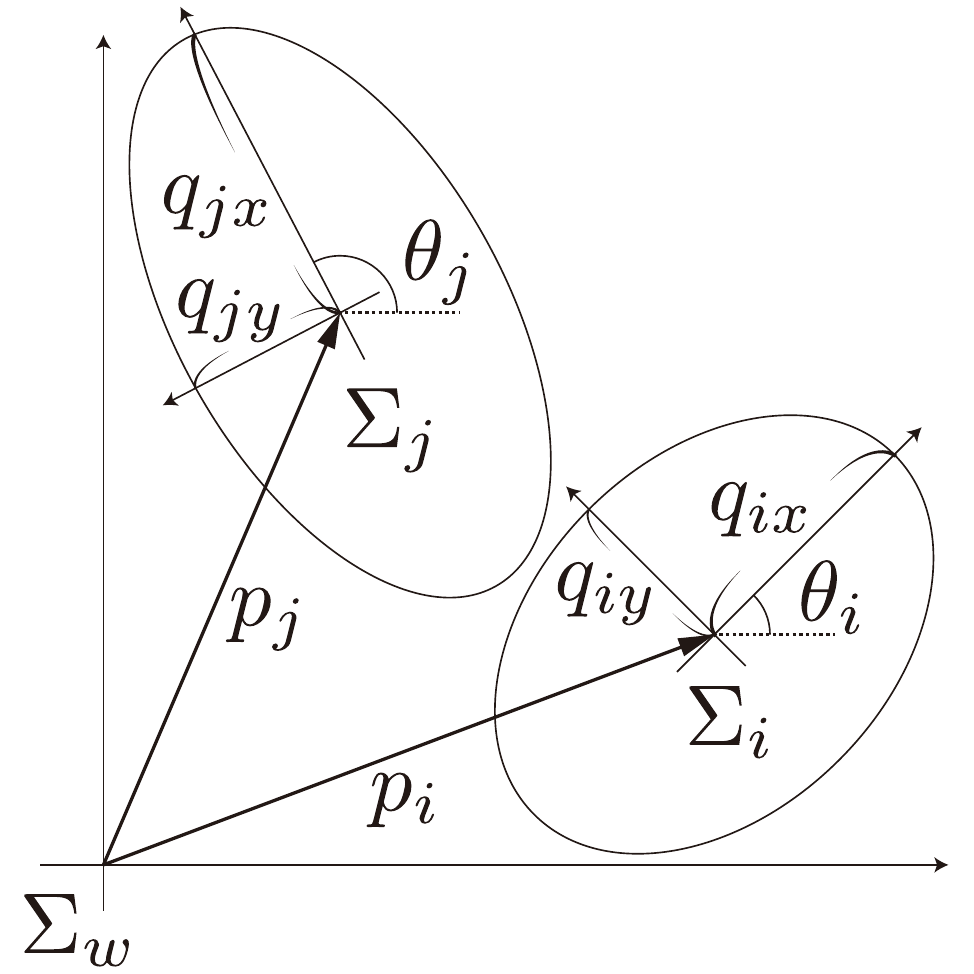}
    \caption{Proposed scenario. The agents characterized as ellipses with heterogeneous shapes avoid collisions with each other.}
    \label{fig:problem_situation}
\end{figure}

\section{Preliminary}
\subsection{Problem Formulation}
In this paper, we present a collision avoidance method among elliptical agents, labeled through the index set $\mc N = \{1 \cdots n\}$, in 2-D Euclidean space $\R^2$, as illustrated in Fig.~1. 
We denote the world coordinate frame as $\Sigma_w$.
We also define the coordinate frame of agent $i$ as $\Sigma_i$, arranged at the center of agent $i$ so that its $x_i$-axis corresponds with the major axis of the ellipse. The relative pose of $\Sigma_i$ with respect to $\Sigma_w$ is described as $(\bm{p}_i, R_i(\theta)): \R^2 \times \SO(2)$ with the position $\bm{p}_i=[p_{ix}, p_{iy}]^T \in \R^2$ and the orientation
\begin{align}
R_i(\theta_i) = 
\begin{bmatrix}
    \cos{(\theta_i)} & -\sin{(\theta_i)} \\
    \sin{(\theta_i)} & \cos{(\theta_i)}
\end{bmatrix},~ \theta_i \in (-\pi, \pi].
\end{align}

The state of agent $i$ is defined as $\bm{x}_i=[p_{ix},p_{iy},\theta_i]^T$. 
We suppose that the motion of agent $i$ can be represented according to a single integrator dynamics,
\begin{align} \label{eq:dynamics}
    \dot{\bm{x}}_{i}=\left[u_{ix},u_{iy},u_{i\theta}\right]^T
\end{align}
with the velocity input $[u_{ix},u_{iy}]^T$ and the angular velocity input $u_{i\theta}$. We denote the control input for agent $i$ as $\bm{u}_i = [u_{ix}, u_{iy}, u_{i\theta}]^T$.
Agent $i$ occupies the elliptical region $\mathcal{E}_i$ described as
\begin{align} \label{eq:ellipse}
    \mathcal{E}_i\!=\!\left\{\bm{X}\!\in\! \mathbb{R}^2 \!\mid\! (\bm{X}\!-\!\bm{p}_i)^TR_iQ_i^{-2}R_i^T(\bm{X}\!-\!\bm{p}_i)\!-\!1\!\leq\!0\right\},
\end{align}
where the constant matrix $Q_i=\diag(q_{ix}, q_{iy})$ is defined with $q_{ix}$ and $q_{iy}$ which specify the length of the major axis and the minor axis.


We propose a control method that prevents a collision between elliptical agents described with \eqref{eq:ellipse}. 
If the minimum distance between $\mathcal{E}_i$ and $\mathcal{E}_j$ is described as $w_{ij}^*(\bm{x}_{i},\bm{x}_{j})$, 
the safe set restricting collisions between agents $i$ and $j$ can be captured by the set
\begin{align}\label{eq:safeset}
    \mathcal{S}_{ij}=\left\{\bm{x}_{i},\bm{x}_{j}\in \mathbb{R}^3 \mid w_{ij}^*(\bm{x}_{i},\bm{x}_{j})\geq 0\right\},
\end{align}
where this set has to be rendered forward invariant. 
For this goal, we leverage control barrier functions (CBFs) being introduced in the next subsection.

\subsection{Control Barrier Functions}
\begin{figure}
    \centering
    \begin{minipage}[b]{0.45\hsize}
        \centering
        \includegraphics[width=25mm]{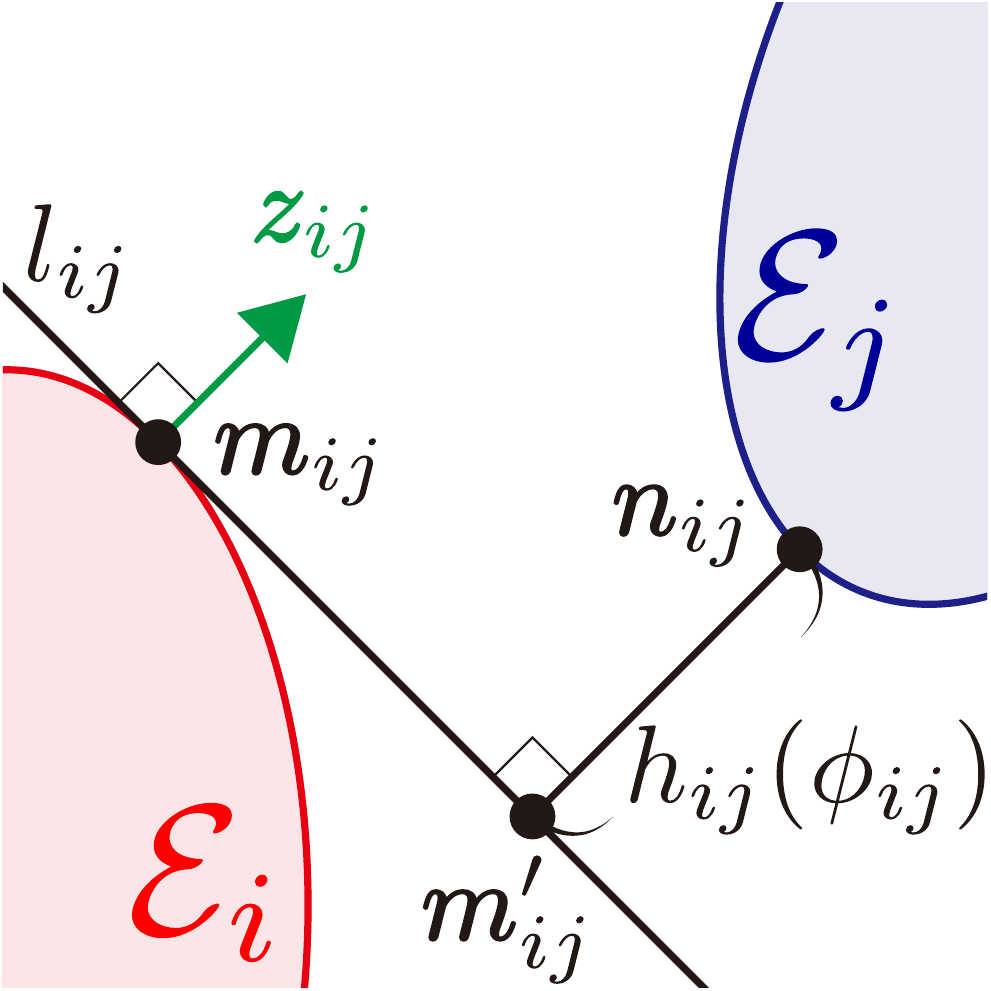}
        \subcaption{}\label{subfig:ellipse_setting_1}
    \end{minipage}
    \begin{minipage}[b]{0.45\hsize}
        \centering
        \includegraphics[width=25mm]{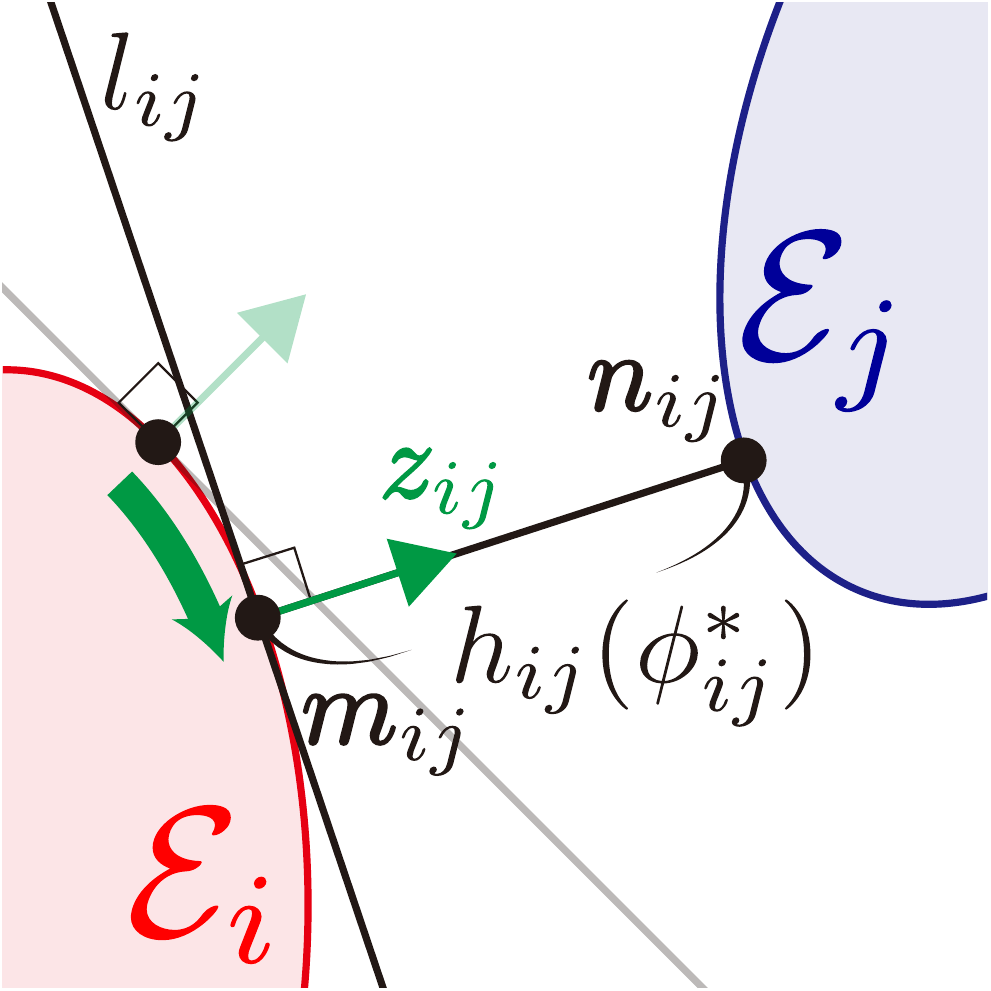}
        \subcaption{}\label{subfig:ellipse_setting_2}
    \end{minipage}
    \caption{The supporting line $l_{ij}$ separating two elliptical agents $\mc E_i$ and $\mc E_j$. 
    (a) shows the distance $h_{ij}(\phi_{ij})$ between the ellipse $\mc E_j$ and a supporting line $l_{ij}$, the normal vector and the tangent point of which is denoted as $\bm{z}_{ij}$ and $\bm{m}_{ij}$, respectively. 
    Although the distance $h_{ij}(\phi_{ij})$ can be derived from \eqref{ThisCBF}, $h_{ij}(\phi_{ij})$ could be shorter than the actual distance between two ellipses.
    (b) illustrates the update law of the supporting line $l_{ij}$, where $l_{ij}$ is rotated on the boundary of $\mc E_i$ so that $h_{ij}(\phi_{ij})$ approaches the actual distance between two ellipses, by maximizing $h_{ij}(\phi_{ij})$ with $\phi_{ij}$. Note that $h_{ij}(\phi_{ij})$ takes a positive value if and only if the line $l_{ij}$ separates two ellipses.}
    \label{fig:ellipse_setting}
\end{figure}
CBFs have been utilized for ensuring the forward invariance property to the set $\mc S$, in which the state $\bm{x}$ should be confined during the task execution of agents.
We assume that a set $\mc S$ can be expressed as the superlevel set of a continuously differentiable function $h: \R^n \to \R$, namely, 
${\mc S}=\{\bm{x}\in\mathbb{R}^n \mid h(\bm{x})\geq 0\}$. Then, CBF is defined as follows.
\begin{definition} \cite[Def. 5]{Ames2017} \label{def:CBF}
Given the control affine system 
\begin{align}
    \dot{\bm{x}} = f(\bm{x}) + g(\bm{x})\bm{u},
\end{align}
where $f$ and $g$ are locally Lipschitz, $\bm{x} \in \R^n$ and $\bm{u} \in \R^m$, together with the set $\mc S$. Then, the function $h$ is a control barrier function (CBF) defined on a set $\bar{\mathcal{S}}$ with $\mathcal{S} \subseteq \bar{\mathcal{S}} \subset \mathbb{R}^n$, if there exists an extended class $\mathcal{K}$ function $\alpha$, such that
\begin{align}
    \sup_{\bm{u}}\left[L_fh(\bm{x})+L_gh(\bm{x})\bm{u}+\alpha(h(\bm{x}))\right]\geq 0,~~~\forall \bm{x}\in\bar{\mathcal{S}}  \label{CBFdefine}
\end{align}
where $L_f h(\bm{x})$ and $L_g h(\bm{x})$ are the Lie derivatives of $h$ along $f(\bm{x})$ and $g(\bm{x})$, respectively.
\end{definition}

The forward invariance of the set $\mc S$ is ensured through the following corollary.
\begin{corollary}\cite[Cor. 2]{Ames2017} 
Given a set $\mc S$, if $h$ is a CBF on $\bar{\mc S}$, then any Lipschitz continuous controller $u(\bm{x}): \bar{\mc S} \to U$ such that 
\begin{align} \label{eq:CBF_cond}
L_f h(\bm{x}) + L_g h(\bm{x})\bm{u}(\bm{x}) + \alpha( h(\bm{x}) ) \geq 0,
\end{align}
will render the set $\mc S$ forward invariant.
\end{corollary}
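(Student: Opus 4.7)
The plan is to reduce forward invariance of $\mc S = \{\bm{x} : h(\bm{x}) \geq 0\}$ to a scalar differential inequality on $h(\bm{x}(t))$ along closed-loop trajectories, and then invoke a comparison argument driven by the extended class $\mc K$ function $\alpha$. Because $u(\bm{x})$ is Lipschitz continuous on $\bar{\mc S}$ and $f,g$ are locally Lipschitz, the closed-loop vector field $f(\bm{x}) + g(\bm{x})\, u(\bm{x})$ is locally Lipschitz, so the standard existence/uniqueness theory yields a unique solution $\bm{x}(t)$ starting from any $\bm{x}(0) \in \mc S \subseteq \bar{\mc S}$ on some maximal forward interval.

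Along this trajectory, the chain rule gives $\dot{h}(\bm{x}(t)) = L_f h(\bm{x}(t)) + L_g h(\bm{x}(t))\, u(\bm{x}(t))$, and substituting the controller hypothesis \eqref{eq:CBF_cond} yields the differential inequality $\dot{h}(\bm{x}(t)) \geq -\alpha(h(\bm{x}(t)))$ as long as $\bm{x}(t) \in \bar{\mc S}$. I would then compare with the scalar ODE $\dot{y} = -\alpha(y)$ initialized at $y(0) = h(\bm{x}(0)) \geq 0$. Since $\alpha$ is extended class $\mc K$ with $\alpha(0) = 0$ and $\alpha(y) > 0$ for $y > 0$, the nonnegative real line is forward invariant for this auxiliary ODE, so $y(t) \geq 0$ for all $t \geq 0$. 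The standard comparison lemma then delivers $h(\bm{x}(t)) \geq y(t) \geq 0$ throughout the interval of existence, i.e., $\bm{x}(t) \in \mc S$, which is exactly forward invariance.

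The main obstacle I anticipate is the regularity issue behind the comparison step: the classical comparison lemma needs uniqueness of solutions of the auxiliary ODE, whereas an extended class $\mc K$ function is only required to be continuous and strictly increasing. I would resolve this either by appealing to a one-sided (Kamke-type) comparison principle that tolerates merely continuous right-hand sides, or by bypassing the comparison lemma entirely in favor of Nagumo's theorem: on $\partial \mc S$ one has $h = 0$, so the CBF inequality collapses to $\dot{h} \geq -\alpha(0) = 0$, which is precisely the tangential condition characterizing forward invariance of $\mc S$. A final bookkeeping check is that the solution extends for all forward time, which follows once we know the trajectory stays in $\mc S \subseteq \bar{\mc S}$ and the closed-loop field has no finite-time escape on that subset.
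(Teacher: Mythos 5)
The paper does not prove this corollary itself --- it is quoted verbatim from \cite[Cor.~2]{Ames2017} --- so the only benchmark is the standard proof in that reference, and your argument (chain rule giving $\dot h = L_f h + L_g h\,u$, the differential inequality $\dot h \geq -\alpha(h(\bm{x}(t)))$, comparison against the maximal solution of $\dot y = -\alpha(y)$ which keeps $[0,\infty)$ invariant since $\alpha(0)=0$, with a Nagumo-type tangency condition as the fallback when $\alpha$ is merely continuous) is exactly that standard route. The regularity caveats you flag --- non-uniqueness of the scalar comparison ODE and the need to rule out finite-time escape --- are the right ones, so the proposal is correct and essentially identical in approach to the cited proof.
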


The condition \eqref{eq:CBF_cond} guaranteeing forward invariance of the set $\mc S$ can be synthesized to the control law through the optimization-based controller leveraging Quadratic Programming (QP). Let us denote the nominal input as $\bm{u}_{\mathrm{nom}}$ and wish to modify it minimally invasive way so as to satisfy the condition \eqref{eq:CBF_cond}. This goal can be achieved by employing the input $\bm{u}^*$ derived from the following QP
\begin{subequations} \label{eq:QP_default}
\begin{align}
    \bm{u}^* =& \argmin_{\bm{u}}~\|\bm{u}-\bm{u}_{\mathrm{nom}}(x)\|^2,\\
    &~\mbox{s.t.}~L_fh(\bm{x})+L_gh(\bm{x})\bm{u}+\alpha(h(\bm{x}))\geq 0.
\end{align}
\end{subequations}

\section{Collision Avoidance for Elliptical Agents}

In this section, we formulate a novel CBF that ensures the forward invariance of the set $\mc S_{ij}$ in \eqref{eq:safeset}, namely preventing agent $i$ from colliding with agent $j$. 
As mentioned previously and from \cite{Do2013, Dimarogonas2019}, it is difficult to derive the analytical solution of the distance between two ellipses, namely $w_{ij}^*$, in a form simple enough to be employed as a CBF. Furthermore, numerical solutions of $w_{ij}^*$ cannot be employed as a CBF. To mitigate the difficulties, we design a novel CBF that incorporates a signed distance from a supporting line of agent $i$ to agent $j$, depicted as $h_{ij}$ in Fig.~\ref{fig:ellipse_setting}. Because $h_{ij}$ could take a shorter length than $w_{ij}^*$ with naive choices of the supporting line, we propose the procedure that drives $h_{ij}$ to $w_{ij}^*$ based on the gradient of an optimization problem.

\subsection{Separation Conditions for Two Elliptical Agents} \label{sec:distance}


\begin{figure}
    \centering
    \includegraphics[width=40mm]{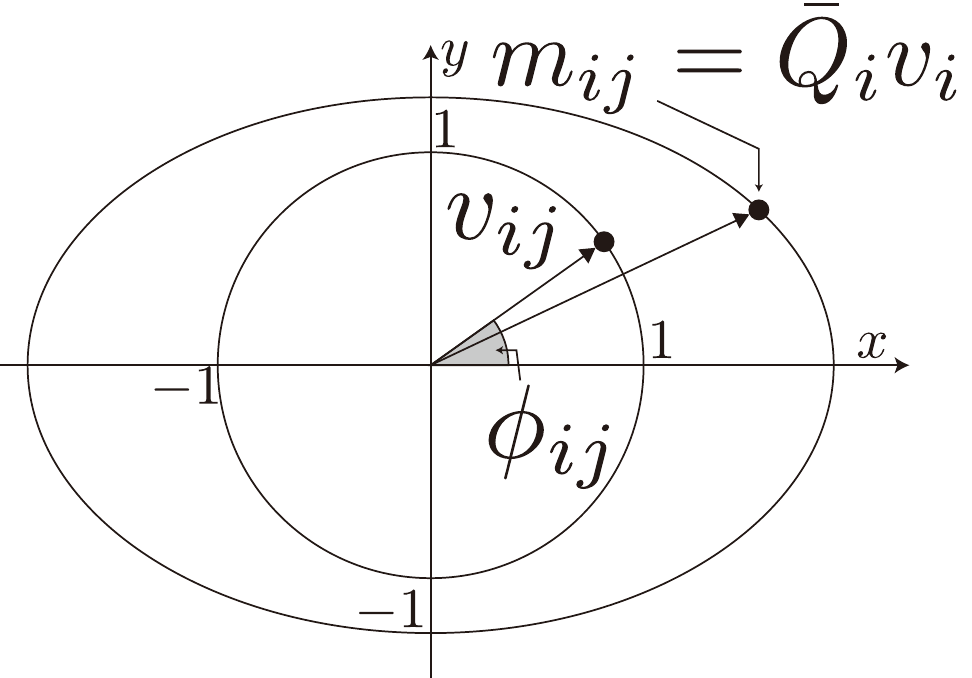}
    \caption{The parameter $\phi_{ij}$, which specifies a point on the boundary of the ellipse. The point on the unit circle is specified by $\bm{v}_{ij}$ in \eqref{eq:v_ij}, which angle from the $x$-axis is $\phi_{ij}$. Then, a point on the ellipse is specified by transforming $\bm{v}_{ij}$ with the positive definite matrix $\bar{Q}_i$.}
    \label{fig:phi_explanation}
\end{figure}

We first introduce a supporting line $l_{ij}$ of agent $i$, which contacts $\mc E_i$ at the point $\bm{m}_{ij}$, as depicted in Fig.~\ref{fig:ellipse_setting}. The point $\bm{m}_{ij}$ can be defined as
\begin{align}
    \bm{m}_{ij}(\bm{x}_i,\phi_{ij})&=\bar{Q}_i\bm{v}_{ij}+\bm{p}_i  \label{eq:m_ij}\\
    \bm{v}_{ij}(\phi_{ij})&=\left[\cos(\phi_{ij}),\sin(\phi_{ij})\right]^T \label{eq:v_ij}
\end{align}
with a positive definite matrix $\bar{Q}_i = R_iQ_iR_i^T$ and a parameter $\phi_{ij}\in(-\pi,\pi]$.
Here, the parameter $\phi_{ij}$ is introduced to specify a point on the boundary of the ellipse $\mc E_i$, where the graphical interpretation is shown in Fig.~\ref{fig:phi_explanation}. 
Then, the supporting line $l_{ij}$ is described as 
\begin{align}
    l_{ij} \!=\! \left\{X\in\mathbb{R}^2~|~\bm{v}_{ij}^T \bar{Q}_i^{-1} X-\left(1+\bm{v}_{ij}^T \bar{Q}_i^{-1} \bm{p}_i\right) \!=\! 0\right\}, \label{hyperplane}
\end{align}
which is determined by $\bm{x}_i$ and $\phi_{ij}$.

\begin{figure}
    \centering
    \begin{minipage}[b]{0.32\hsize}
        \centering
        \includegraphics[width=26mm]{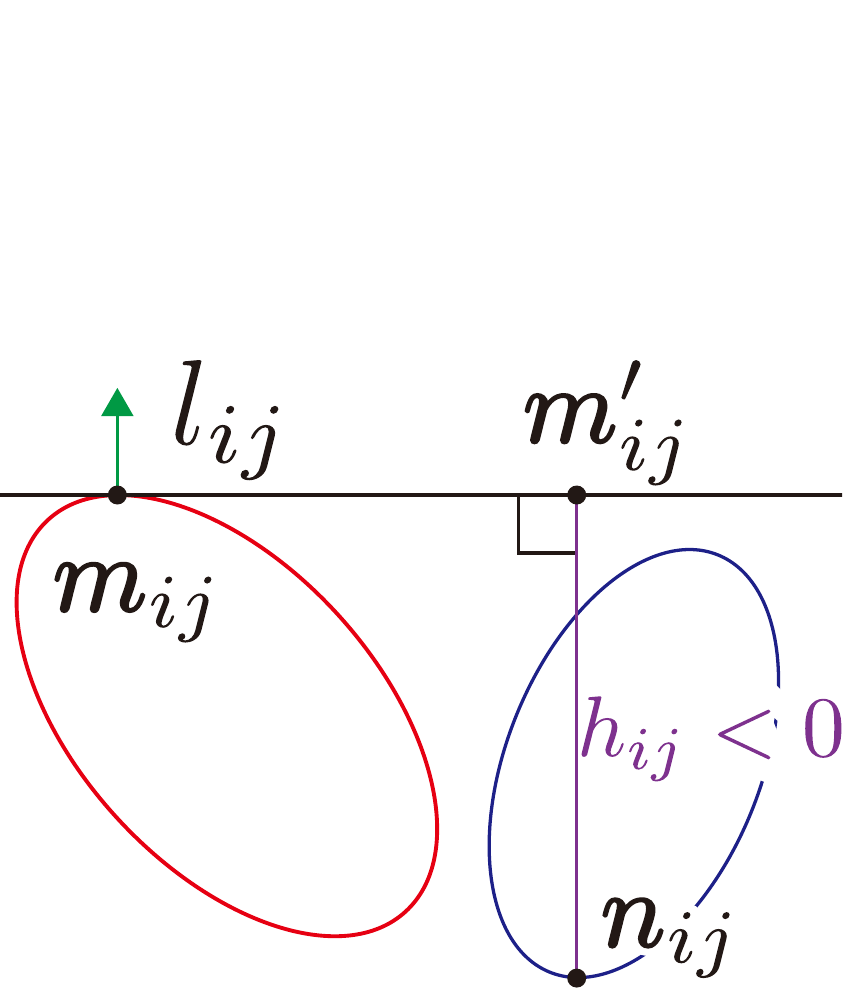}
        \subcaption{$h_{ij}(\phi_{ij})<0$}\label{subfig:h_explanation_1}
    \end{minipage}
    \begin{minipage}[b]{0.32\hsize}
        \centering
        \includegraphics[width=26mm]{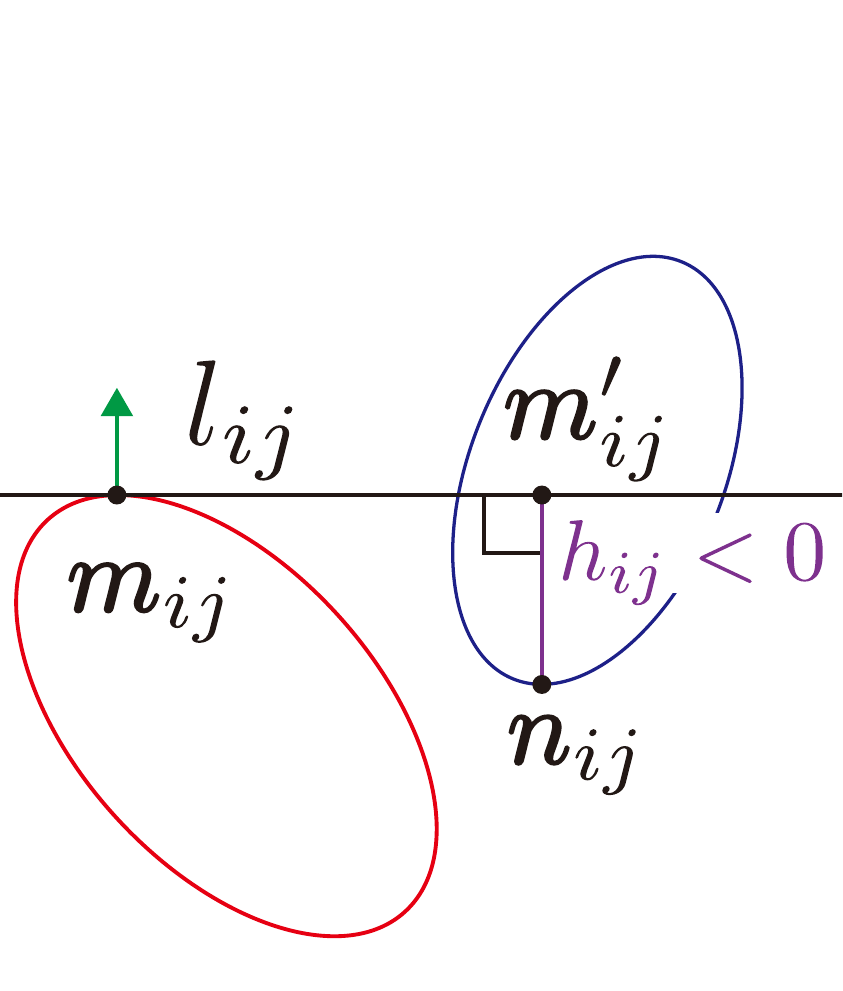}
        \subcaption{$h_{ij}(\phi_{ij})<0$}\label{subfig:h_explanation_2}
    \end{minipage}
    \begin{minipage}[b]{0.32\hsize}
        \centering
        \includegraphics[width=26mm]{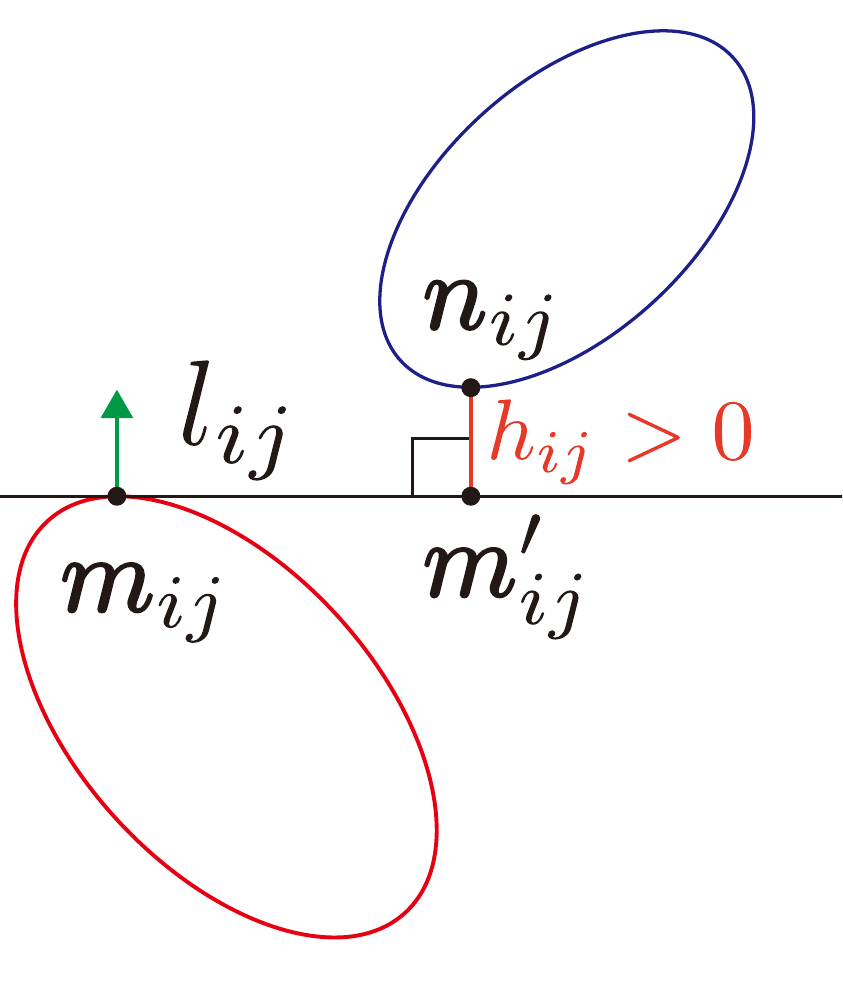}
        \subcaption{$h_{ij}(\phi_{ij})>0$}\label{subfig:h_explanation_3}
    \end{minipage}
    \caption{
    The separation condition evaluated by $h_{ij}$, the minimum signed distance from the line $l_{ij}$. 
    The signed distance $h_{ij}$ takes zero on $l_{ij}$ and takes the larger value as a point to be evaluated moves to the upper direction specified by the green normal vector. 
    (a) and (b): Since the proposed signed distance provides a negative value to a point in the same half-plane with the ellipse $\mc E_i$, $\bm{n}_{ij}\in \mc E_j$ equipped with the minimum signed distance $h_{ij}$ is the one furthest from $l_{ij}$. (c): When the supporting line separates two ellipses, $h_{ij}$ returns the distance between $l_{ij}$ and $\mc E_j$.
    }
    \label{fig:h_explanation}
\end{figure}


Let us derive the separation condition evaluated with the signed distance from the supporting line $l_{ij}$, which provides a positive value to a point in the different half-plane with $\mc E_i$, and a negative value otherwise as shown in Fig.~\ref{fig:h_explanation}.
The point $\bm{n}_{ij} \in \mc E_j$ that minimizes the signed distance from the supporting line $l_{ij}$ is determined by
\begin{align}
    \bm{n}_{ij}(\bm{x}_i,\bm{x}_j,\phi_{ij}) = -\frac{1}{\left\|\bar{Q}_j \bar{Q}_i^{-1}\bm{v}_{ij}\right\|}\bar{Q}_j^2\bar{Q}_i^{-1}\bm{v}_{ij} + \bm{p}_j.
\end{align}
Then, the minimum signed distance from $l_{ij}$ is calculated by
\begin{align}
    h_{ij}(\bm{x}_i,\bm{x}_j,\phi_{ij})\!=\!\frac{ -\left\|\bar{Q}_j \bar{Q}_i^{-1} \bm{v}_{ij}\right\| \!+\! (\bm{p}_j-\bm{p}_i)^T\bar{Q}_i^{-1} \bm{v}_{ij} \!-\! 1}{\left\|\bar{Q}_i^{-1}\bm{v}_{ij}\right\|}, \label{ThisCBF}
\end{align}
which satisfies $h_{ij}(\bm{x}_i,\bm{x}_j,\phi_{ij}) > 0$ if and only if $\mc E_i$ and $\bm{n}_{ij}$ are in the different half-plane separated by the line $l_{ij}$ as shown in Fig.~\ref{fig:h_explanation}.
In other words, if there exists $\phi_{ij}$ that fulfills $h_{ij}(\bm{x}_i,\bm{x}_j,\phi_{ij}) > 0$, then two ellipses are separated by the line $l_{ij}$.
Note that $\bm{n}_{ij} \in \mc E_j$ is not the closest point to the supporting line $l_{ij}$ as depicted in Fig.~\ref{fig:h_explanation}(a) and (b).
In the remaining of this subsection, we analyze the property of $h_{ij}(\bm{x}_i,\bm{x}_j,\phi_{ij})$ assuming two agents $i$ and $j$ are fixed, hence denote $h_{ij}(\bm{x}_i,\bm{x}_j,\phi_{ij})$ as $h_{ij}(\phi_{ij})$.

The fact that $h_{ij}(\phi_{ij}) > 0$ encodes the separation condition between two agents motivates us to employ $h_{ij}(\phi_{ij})$ as a CBF for ensuring collision avoidance between elliptical agents.
However, the naive selection of the parameter $\phi_{ij}$, namely the line $l_{ij}$, could yield a shorter distance than the actual distance $w_{ij}^*$ as depicted in Fig.~\ref{fig:ellipse_setting}(a). 
Because of this difference from $w_{ij}^*$, $h_{ij}(\phi_{ij})$ could overestimate the risk of collisions and hence cause too conservative evasive motion.
To mitigate this gap, we propose the following optimization problem that intends to rotate the supporting line $l_{ij}$ on the boundary of $\mc E_i$ so that $h_{ij}(\phi_{ij})$ is maximized as
\begin{subequations}\label{eq:dual}
\begin{align}
    \max_{\phi_{ij}}~& h_{ij}(\phi_{ij}),\\
    \mbox{s.t.}~& \bm{v}_{ij}=\left[\cos(\phi_{ij}),\sin(\phi_{ij})\right]^T,
\end{align}
\end{subequations}
where its graphical interpretation is depicted in Fig.~\ref{fig:ellipse_setting}(b).


While, in the next subsection, we introduce the input to $\phi_{ij}$ for maximizing $h_{ij}(\phi_{ij})$, we first establish the connection between the optimization problem \eqref{eq:dual} and the actual distance between two agents, 
namely $w_{ij}^*$.
For this goal, we introduce the following optimization problem, which optimal solution $\|\bm{w}^*\|$ is equal to $w_{ij}^*$.
\begin{subequations}\label{eq:primal}
\begin{align}
    \min_{\bm{\xi},\bm{\eta},\bm{w}}~&\|\bm{w}\|,\\
    \mbox{s.t.}~&f_i(\bm{\xi})\leq 0,~f_j(\bm{\eta})\leq 0,\\
    &\bm{\eta}-\bm{\xi}=\bm{w},
\end{align}
\end{subequations}
where $f_k(\bm{\xi}):=(\bm{\xi}-\bm{p}_k)^T\bar{Q}_k^{-2}(\bm{\xi}-\bm{p}_k)-1\leq 0$ signifies the condition $\bm{\xi} \in \mc E_k$.
Then, the following theorem formalizes the relationship between $w_{ij}^*$ and $h_{ij}(\phi_{ij})$.
\begin{theorem} \label{th:Prim_dual}
Suppose that two ellipses $\mc E_i$ and $\mc E_j$ have no overlap, namely $\mc E_i \cap \mc E_j = \emptyset$ holds.
Then, the optimization problem \eqref{eq:dual} is the dual of the problem \eqref{eq:primal}.
Furthermore, the strong duality holds between the optimization problems \eqref{eq:primal} and \eqref{eq:dual}, namely the following condition holds
\begin{align} \label{eq:no_gap}
    w_{ij}^* = h_{ij}^* \geq h_{ij}(\phi_{ij}).
\end{align}
\end{theorem}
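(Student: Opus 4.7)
The plan is to derive the Lagrangian dual of the convex program~\eqref{eq:primal} and show that, after a natural change of variables, it coincides with~\eqref{eq:dual}; strong duality then yields $w_{ij}^* = h_{ij}^*$, while $h_{ij}^* \geq h_{ij}(\phi_{ij})$ is immediate from the definition of a maximum.

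First I would verify that~\eqref{eq:primal} is a convex program: the objective $\|\bm{w}\|$ is a norm, each $f_k$ is a strictly convex quadratic (since $\bar Q_k^{-2} \succ 0$), and the coupling $\bm{\eta} - \bm{\xi} = \bm{w}$ is affine. Slater's condition is then checked by exhibiting a strictly feasible point, e.g.\ $\bm{\xi} \in \mathrm{int}\,\mc E_i$, $\bm{\eta} \in \mathrm{int}\,\mc E_j$, $\bm{w} = \bm{\eta} - \bm{\xi}$, which exists because both ellipses have nonempty interior. This secures strong duality, and the hypothesis $\mc E_i \cap \mc E_j = \emptyset$ additionally ensures $w_{ij}^* > 0$, so the dual optimum is strictly positive as well.

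Next, I would compute the dual function $g(\bm{\mu},\lambda_i,\lambda_j) = \inf_{\bm{w},\bm{\xi},\bm{\eta}} L$ from the Lagrangian $L = \|\bm{w}\| + \lambda_i f_i(\bm{\xi}) + \lambda_j f_j(\bm{\eta}) + \bm{\mu}^T(\bm{\eta} - \bm{\xi} - \bm{w})$. Minimizing in $\bm{w}$ invokes the conjugate of the Euclidean norm and forces $\|\bm{\mu}\| \leq 1$; the first-order conditions in $\bm{\xi}, \bm{\eta}$ yield $\bm{\xi}^* = \bm{p}_i + (2\lambda_i)^{-1}\bar Q_i^2 \bm{\mu}$ and $\bm{\eta}^* = \bm{p}_j - (2\lambda_j)^{-1}\bar Q_j^2 \bm{\mu}$. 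Substituting back and then maximizing over $\lambda_i, \lambda_j > 0$ gives $\lambda_k^* = \|\bar Q_k \bm{\mu}\|/2$, and the dual collapses to $\max_{\|\bm{\mu}\| \leq 1}\bigl(-\|\bar Q_i \bm{\mu}\| - \|\bar Q_j \bm{\mu}\| + \bm{\mu}^T(\bm{p}_j - \bm{p}_i)\bigr)$.

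Finally, I would identify this reduced dual with~\eqref{eq:dual} by parametrizing any unit vector $\bm{\mu}$ as $\bm{\mu} = \bar Q_i^{-1} \bm{v}_{ij}/\|\bar Q_i^{-1} \bm{v}_{ij}\|$ with $\bm{v}_{ij} = [\cos\phi_{ij}, \sin\phi_{ij}]^T$: factoring $1/\|\bar Q_i^{-1}\bm{v}_{ij}\|$ out of the objective turns it into exactly the expression $h_{ij}(\phi_{ij})$ in~\eqref{ThisCBF}. The step I expect to require the most care is handling edge cases in this parametrization: I must rule out $\bm{\mu} = \bm{0}$ and $\lambda_k^* = 0$ as dual optima, but both are excluded because $\mc E_i \cap \mc E_j = \emptyset$ forces the optimal dual value to be strictly positive, which in turn forces $\|\bm{\mu}\|=1$ and $\lambda_k^* > 0$. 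Everything else is routine bookkeeping in convex duality.
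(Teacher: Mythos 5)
Your proposal is correct and follows essentially the same route as the paper: form the Lagrangian of \eqref{eq:primal}, use the conjugate of the norm to get the constraint $\|\bm{z}\|\leq 1$, eliminate $\lambda_i,\lambda_j$ by maximization, reparametrize the dual variable via $\bar{Q}_i^{-1}\bm{v}_{ij}$ to recover $h_{ij}(\phi_{ij})$, and invoke Slater's condition for strong duality. Your explicit verification of Slater via interior points and your homogeneity argument for why the optimal multiplier lies on the unit sphere are slightly more detailed than the paper's treatment, but the substance is identical.
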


\begin{proof}
The dual function of the problem \eqref{eq:primal} is
\begin{align}
\begin{aligned}
g(\lambda_i,\lambda_j,\bm{z}) = \inf_{\bm{\xi},\bm{\eta},\bm{w}}\left( \|\bm{w}\| \right.&\left.+\lambda_if_i(\bm{\xi}) + \lambda_jf_j(\bm{\eta}) \right.\\
&\left. +\bm{z}^T(\bm{\eta}-\bm{\xi}-\bm{w})  \right)
\end{aligned}\\
=\begin{cases}
        \begin{aligned}
        \inf_{\bm{\xi}} &\left(\lambda_i f_i(\bm{\xi}) - \bm{z}^T\bm{\xi}\right)\\&+ \inf_{\bm{\eta}} \left(\lambda_j f_j(\bm{\eta}) + \bm{z}^T\bm{\eta}\right)
        \end{aligned} & \begin{aligned}
        \|\bm{z}\|&\leq 1,\\\lambda_i,\lambda_j &\geq 0
        \end{aligned}\\
    -\infty & \mbox{otherwise}
    \end{cases}, \label{eq:Lagrangian}
\end{align}
where $\lambda_i$, $\lambda_j$ and $\bm{z}$ are Lagrange multipliers. 
To simplify the first term $\inf_{\bm{\xi}} \left(\lambda_i f_i(\bm{\xi}) - \bm{z}^T\bm{\xi}\right)$ in \eqref{eq:Lagrangian}, we introduce $\bar{\bm{\xi}}=\bar{Q}_i^{-1}\bm{\xi}$, $\bar{\bm{p}}_i=\bar{Q}_i^{-1}\bm{p}_i$, and $\bar{\bm{z}}=\bar{Q}_i \bm{z}$.
Then, the first term is transformed as
\begin{align}
    &\inf_{\bm{\xi}} \left(\lambda_i f_i({\bm{\xi}}) - \bm{z}^T{\bm{\xi}}\right)\nonumber\\
    =&\inf_{\bm{\xi}}\left( \lambda_i({\bm{\xi}}-\bm{p}_i)^T\bar{Q}_i^{-2}({\bm{\xi}}-\bm{p}_i)-\lambda_i - \bm{z}^T{\bm{\xi}}\right)\nonumber\\
    =&-\bar{\bm{z}}^T\bar{\bm{p}}_i-\frac{\|\bar{\bm{z}}\|^2+4\lambda_i^2}{4\lambda_i}. \label{eq:dual_first}
\end{align}
Following the similar path to \eqref{eq:dual_first}, the second term in \eqref{eq:Lagrangian} is expressed as
\begin{align}
    \inf_{\bm{\eta}} \left(\lambda_j f_j(\bm{\eta}) + \bm{z}^T\bm{\eta}\right)=\hat{\bm{z}}^T\hat{\bm{p}}_j-\frac{\|\hat{\bm{z}}\|^2+4\lambda_j^2}{4\lambda_j},
\end{align}
where $\hat{\bm{p}}_j=\bar{Q}_j^{-1}\bm{p}_j$ and $\hat{\bm{z}}=\bar{Q}_j \bm{z}$.
Therefore, the dual problem can be expressed as follows:
\begin{subequations}\label{eq:dual_1}
\begin{align}
    \max_{z,\lambda_i,\lambda_j}~&-\bar{\bm{z}}^T\bar{\bm{p}}_i -\frac{\|\bar{\bm{z}}\|^2+4\lambda_i^2}{4\lambda_i}
    +\hat{\bm{z}}^T\hat{\bm{p}}_j-\frac{\|\hat{\bm{z}}\|^2+4\lambda_j^2}{4\lambda_j},\label{eq:dual_1_a}\\
    \mbox{s.t.}~&\bar{\bm{z}}=Q_i\bm{z},~
    \hat{\bm{z}}=Q_j \bm{z},~
    \|\bm{z}\|\leq 1,~
    \lambda_i,\lambda_j\geq 0.
\end{align}
\end{subequations}

Focusing on the second and fourth terms in \eqref{eq:dual_1_a}, we define a function $M(a,x)$ as 
\begin{align}
    M(a,x)=-\frac{a+4x^2}{4x}~~~(a,x\geq 0),
\end{align}
and consider $\max_{a,x}M(a,x)$. 

In the case of $a>0$, the gradient of $M$ is 
\begin{align}
    \frac{\partial M}{\partial x} = \frac{(\sqrt{a}-2x)(\sqrt{a}+2x)}{4x^2},~\frac{\partial M}{\partial a} = - \frac{1}{4x}.
\end{align}
Thus, the function $M(a,x)$ has no extremum for all $a>0$, and for $x\geq0$ it has the maximum value at $x^*(a)={\sqrt{a}}/{2}$. 
As a result, the following equation holds. 
\begin{align}
    \max_{a,x}M(a,x)
    =\max_{a}-\sqrt{a}~~~(a>0).
\label{eq:a_neq_0}
\end{align}

In the case of $a=0$, the maximum value of $M(a,x)$ is as follows:
\begin{align}
    \max_{a,x} M(a,x) = \max_x-x = 0.
    \label{eq:a_eq_0}
\end{align}
%
Since \eqref{eq:a_neq_0} is equivalent to \eqref{eq:a_eq_0} if we substitute $a=0$ into \eqref{eq:a_neq_0}, we can summarize them as 
\begin{align}
    \max_{a,x} M(a,x)=\max_{a}-\sqrt{a}~~~(a\geq 0).
    \label{eq:maxM_a_x}
\end{align}
Considering that the second and fourth terms in \eqref{eq:dual_1_a} are equal to $M(\|\bar{\bm{z}}\|^2,\lambda_i)$ and $M(\|\hat{\bm{z}}\|^2,\lambda_j)$, respectively, we can simplify the problem \eqref{eq:dual_1} by utilizing \eqref{eq:maxM_a_x} as follows:
\begin{subequations}\label{eq:dual_2}
\begin{align}
    \max_{\bm{z}}~&-\bar{\bm{z}}^T\bar{\bm{p}}_i-\left\|\bar{\bm{z}}\right\| + \hat{\bm{z}}^T\hat{\bm{p}}_j-\left\|\hat{\bm{z}}\right\| ,\\
      \mbox{s.t.}~&\|\bm{z}\|\leq 1, \label{eq:dual_2_const2}\\
      &\bar{\bm{z}}=\bar{Q}_i \bm{z},~\hat{\bm{z}}=\bar{Q}_j \bm{z}.
\end{align}
\end{subequations}

Let us parameterize $\bm{z}$ as 
\begin{align}
    \bm{z}&=\frac{\mu}{\left\|\bar{Q}_i^{-1}\bm{v}_{ij}\right\|}\bar{Q}_i^{-1}\bm{v}_{ij},\label{eq:z_changed}
\end{align}
with $0\leq\mu\leq 1$ so that the constraint \eqref{eq:dual_2_const2} is satisfied.
Then, by substituting \eqref{eq:z_changed}, $\bar{\bm{z}} = \bar{Q}_i \bm{z}$ and $\hat{\bm{z}} = \bar{Q}_j \bm{z}$, the dual problem \eqref{eq:dual_2} can be transformed  as follows
\begin{subequations}\label{eq:dual_3}
\begin{align}
    \max_{\mu,\phi_{ij}}~&\mu\underbrace{\frac{ -\left\|\bar{Q}_j \bar{Q}_i^{-1} \bm{v}_{ij}\right\| + (\bm{p}_j-\bm{p}_i)^T\bar{Q}_i^{-1} \bm{v}_{ij} - 1}{\left\|\bar{Q}_i^{-1}\bm{v}_{ij}\right\|}}_{h_{ij}(\phi_{ij})},\label{eq:dual_3_a}\\
    \mbox{s.t.}~& \bm{v}_{ij}=[\cos(\phi_{ij}),\sin(\phi_{ij})]^T,
    \\&0\leq \mu \leq 1,\label{eq:dual_3_c}
\end{align}
\end{subequations}
where the objective function \eqref{eq:dual_3_a} can be described as $\mu h_{ij}(\phi_{ij})$. 
Notice that if ellipses $\mathcal{E}_i$ and $\mathcal{E}_j$ are separated, there always exists $\phi_{ij}$ that satisfies $h_{ij}(\phi_{ij})>0$. Considering the constraint \eqref{eq:dual_3_c} and the fact $\mu$ is a variable independent of $\phi_{ij}$, we should set $\mu=1$ to maximize \eqref{eq:dual_3_a}. 
Therefore, by substituting $\mu=1$ to the problem \eqref{eq:dual_3}, we obtain the optimization problem \eqref{eq:dual}.
This relation leads to the conclusion that the problem \eqref{eq:dual} is the dual of the optimization problem \eqref{eq:primal} if $\mc E_i \cap \mc E_j = \emptyset$.
Since the optimization problem \eqref{eq:primal} satisfies the Slater's Condition \cite[Sec. 5.2.3]{convex_optimization}, the solution of \eqref{eq:primal} is equivalent to the solution of \eqref{eq:dual}, which completes the proof. 
\end{proof}

Theorem~\ref{th:Prim_dual} implies that the proposed update law of the supporting line, described as the optimization problem \eqref{eq:dual} and Fig.~\ref{fig:ellipse_setting}, renders $h_{ij}(\phi_{ij})$ the actual distance $w_{ij}^*$ between two ellipses. Furthermore, because of \eqref{eq:no_gap}, the condition $h_{ij}(\phi_{ij})>0$ provides a sufficient condition for preventing collisions even if $\phi_{ij}$ does not converge to the maximizer of \eqref{eq:dual}. 
In the following subsection, we develop $h_{ij}$ as a CBF together with presenting the update procedure of $\phi_{ij}$. 


\subsection{CBFs Incorporating Rotating Supporting Lines} \label{sec:CBF}
In this subsection, we first design the collision avoidance methods between two elliptical agents $i$ and $j$, then extend the results to the case of $n$ agents. 
Hereafter, we regard $h_{ij}$ as a function of $\bm{x}_i, \bm{x}_j$, and $\phi_{ij}$ as first defined in \eqref{ThisCBF}, though we have dropped the dependency of $h_{ij}$ for notational convenience.

As mentioned in the previous subsection, we require a supporting line between agents $i$ and $j$ to evaluate the separation condition $h_{ij}$. Without loss of generality, we introduce a supporting line for the agent with a lower ID number. Then, as the supporting line is to be updated for maximizing $h_{ij}$, the model of the agent \eqref{eq:dynamics} now has to describe the dynamics of $\phi_{ij}$ as well. Therefore, we introduce the augmented state $\bm{x}_{ij}=\left[\bm{x}_{i}^T,~\bm{x}_{j}^T,~\phi_{ij}\right]^T$ to achieve collision avoidance between two agents $i$ and $j$ ($i<j$), where the dynamics of the augmented state is 
\begin{align}
    \dot{\bm{x}}_{ij} = \bm{u}_{ij}, \label{eq:aug_syst_ij}
\end{align}
where $\bm{u}_{ij} = \left[\bm{u}_i^T, \bm{u}_j^T, u_{\phi_{ij}} \right]^T$.

As the nominal input for $\phi_{ij}$ intending to maximize $h_{ij}$, we employ the following gradient-based input 
\begin{align}
    u_{\mathrm{nom},\phi_{ij}}=\gamma \frac{\partial h_{ij}}{\partial \phi_{ij}},~~~\gamma >0. \label{dphi}
\end{align}
The input \eqref{dphi} drives $h_{ij}$ to a local maximum point, and hence preventing too conservative evasive motions caused by the difference between $w_{ij}^*$ and $h_{ij}$. 
Note that the maximizer $\phi_{ij}^*$ of $h_{ij}$ changes as the agent $i$ and $j$ traverse.
However, as $\phi_{ij}$ is a virtual variable not dependent on the physical dynamics of the agents, we can make $\phi_{ij}$ converge fast enough with large $\gamma$ to follow the agent movement, as we will demonstrate in the simulations in Section~\ref{subsec:two_simu}.



Let us denote the nominal control input for the agent~$i$ as $\bm{u}_{{\rm nom}, i}$, which is designed to achieve its own objective, e.g., reaching the goal position. Combining this nominal control input $\bm{u}_{{\rm nom}, i}$ with \eqref{dphi}, the nominal input for the augmented state $\bm{x}_{ij}$ is expressed as
$\bm{u}_{\mathrm{nom},ij}=\left[\bm{u}_{\mathrm{nom},i}^T,~\bm{u}_{\mathrm{nom},j}^T,~u_{\mathrm{nom},\phi_{ij}}\right]^T$.

The goal of the collision avoidance strategy is to allow agents to execute their tasks encoded by $\bm{u}_{{\rm nom}, i}$ while ensuring collision never occurs between agents. 
To achieve this objective, we propose the collision avoidance methods that render the following set $\hat{\mc S}_{ij}$ forward invariant.
\begin{align}
    \hat{\mc S}_{ij} = \left\{\bm{x}_{ij} \in \mathbb{R}^6\times (-\pi, \pi]  \mid h_{ij}(\bm{x}_{ij})\geq 0\right\}
\end{align}
Since $w_{ij}^* \geq h_{ij}$ holds from Theorem~\ref{th:Prim_dual}, the set $\mc S_{ij}$ in \eqref{eq:safeset} is also ensured to be forward invariant if we guarantee the forward invariance of $\hat{\mc S}_{ij}$.
Similar to \eqref{eq:QP_default}, this strategy can be achieved by $\bm{u}_{ij}^*$ derived from the following QP integrating $h_{ij}$ as a CBF:
\begin{subequations} \label{eq:QP_two}
\begin{align}
    &\bm{u}_{ij}^* = \argmin_{\bm{u}_{ij}}~\left\|\bm{u}_{ij}-\bm{u}_{\mathrm{nom},ij}\right\|^2,\\
    &\mbox{s.t.}~{\frac{\partial h_{ij}}{\partial \bm{x}_{i}}}^T\bm{u}_i +{\frac{\partial h_{ij}}{\partial \bm{x}_{j}}}^T\bm{u}_j +\frac{\partial h_{ij}}{\partial \phi_{ij}}u_{\phi_{ij}} + \alpha(h)\geq 0.\label{trueCBFcond}
\end{align}
\end{subequations}
Note that the constraint \eqref{trueCBFcond} is derived by calculating the Lie derivative of $h_{ij}$ along $f_{ij}=\bm{0}_{7\times 1}$ and $g_{ij}=I_7$ in \eqref{eq:aug_syst_ij}.

As the proposed CBF $h_{ij}$ prevents collision between elliptical agents $i$ and $j$, we need to confirm whether there always exists the control input that satisfies \eqref{CBFdefine} in the Definition~\ref{def:CBF} and it provides us the forward invariance property. 
\begin{theorem} \label{th:valid_CBF}
The function $h_{ij}$ in \eqref{ThisCBF} is a valid CBF when $\bm{u}_i,\bm{u}_j \in \R^3$. Namely, for any $\bm{x}_{i},\bm{x}_{j}\in\mathbb{R}^3$, the following conditions hold. 
\begin{align}
    \frac{\partial h_{ij}}{\partial \bm{x}_{i}}&=\left[{\frac{\partial h_{ij}}{\partial \bm{p}_i}}^T,\frac{\partial h_{ij}}{\partial \theta_i}\right]^T\neq \bm{0}_{3\times 1}\\
    \frac{\partial h_{ij}}{\partial \bm{x}_{j}}&=\left[{\frac{\partial h_{ij}}{\partial \bm{p}_j}}^T,\frac{\partial h_{ij}}{\partial \theta_j}\right]^T\neq \bm{0}_{3\times 1}
\end{align}
\end{theorem}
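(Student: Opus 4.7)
The plan is to establish non-vanishing of $\partial h_{ij}/\partial \bm{x}_i$ and $\partial h_{ij}/\partial \bm{x}_j$ by restricting attention to their translational sub-blocks $\partial h_{ij}/\partial \bm{p}_i$ and $\partial h_{ij}/\partial \bm{p}_j$, which admit a particularly simple closed-form expression. The orientation components $\partial h_{ij}/\partial \theta_i$ and $\partial h_{ij}/\partial \theta_j$ involve derivatives of $\bar{Q}_i = R_i Q_i R_i^T$ and $\bar{Q}_j = R_j Q_j R_j^T$ and are considerably messier, but I will not need them, since non-vanishing of the translational sub-block already forces the full gradient to be nonzero.

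First I would inspect the structure of \eqref{ThisCBF}. The denominator $\|\bar{Q}_i^{-1}\bm{v}_{ij}\|$ depends only on $\theta_i$ and $\phi_{ij}$, while $\bm{p}_i$ and $\bm{p}_j$ enter the numerator only through the single linear term $(\bm{p}_j - \bm{p}_i)^T \bar{Q}_i^{-1}\bm{v}_{ij}$. A direct differentiation then yields
\begin{align}
\frac{\partial h_{ij}}{\partial \bm{p}_j} = -\frac{\partial h_{ij}}{\partial \bm{p}_i} = \frac{\bar{Q}_i^{-1}\bm{v}_{ij}}{\|\bar{Q}_i^{-1}\bm{v}_{ij}\|},
\end{align}
so both partials are, up to sign, the normalization of the vector $\bar{Q}_i^{-1}\bm{v}_{ij}$.

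Next I would verify that this quotient is well-defined and nonzero. Because $Q_i = \diag(q_{ix}, q_{iy})$ is positive definite and $R_i \in \SO(2)$ is orthogonal, $\bar{Q}_i$ is invertible; and by \eqref{eq:v_ij}, $\bm{v}_{ij} = [\cos\phi_{ij}, \sin\phi_{ij}]^T$ is a unit vector for every $\phi_{ij}$. Hence $\bar{Q}_i^{-1}\bm{v}_{ij} \neq \bm{0}$, its norm is strictly positive, and the displayed quantity is a unit vector in $\R^2$. Since $\partial h_{ij}/\partial \bm{p}_i$ and $\partial h_{ij}/\partial \bm{p}_j$ appear as nonzero sub-blocks of $\partial h_{ij}/\partial \bm{x}_i$ and $\partial h_{ij}/\partial \bm{x}_j$, neither full gradient can equal $\bm{0}_{3\times 1}$, which is the stated claim. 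Combined with $g_{ij}=I_7$ in \eqref{eq:aug_syst_ij}, this further implies that the CBF constraint \eqref{trueCBFcond} is always feasible: aligning $\bm{u}_i$ (or $\bm{u}_j$) with the corresponding gradient and taking its magnitude sufficiently large makes the left-hand side of \eqref{trueCBFcond} arbitrarily positive, so $h_{ij}$ qualifies as a valid CBF per Definition~\ref{def:CBF}.

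I do not anticipate any substantial obstacle; the argument is essentially a one-line observation enabled by the specific parameterization adopted in \eqref{eq:m_ij}--\eqref{eq:v_ij}. The only point worth flagging is that invertibility of $\bar{Q}_i$ is essential to the non-vanishing conclusion, but this follows immediately from positive-definiteness of $Q_i$ and orthogonality of $R_i$.
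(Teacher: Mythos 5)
Your proposal is correct and follows essentially the same route as the paper: differentiate \eqref{ThisCBF} with respect to $\bm{p}_i$ and $\bm{p}_j$ to obtain $\pm\bar{Q}_i^{-1}\bm{v}_{ij}/\|\bar{Q}_i^{-1}\bm{v}_{ij}\|$, note these are unit vectors since $\bar{Q}_i\succ 0$ and $\|\bm{v}_{ij}\|=1$, and conclude the full gradients cannot vanish. The paper's proof is exactly this computation, so no further commentary is needed.
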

\begin{proof}
${\partial h_{ij}}/{\partial \bm{p}_i}$ and ${\partial h_{ij}}/{\partial \bm{p}_j}$ are given by
\begin{align}
    \frac{\partial h_{ij}}{\partial \bm{p}_i} &= -\frac{1}{\left\|\bar{Q}_i^{-1}\bm{v}_{ij}\right\|}\bar{Q}_i^{-1}\bm{v}_{ij},\\
    \frac{\partial h_{ij}}{\partial \bm{p}_j} &= \frac{1}{\left\|\bar{Q}_i^{-1}\bm{v}_{ij}\right\|}\bar{Q}_i^{-1}\bm{v}_{ij}.
\end{align}
Considering the fact $\left\|\bm{v}_{ij}\right\|=1$ and $\bar{Q}_i\succ 0$, 
\begin{align}
    \left\|\frac{\partial h_{ij}}{\partial \bm{p}_i}\right\|=\left\|\frac{\partial h_{ij}}{\partial \bm{p}_j}\right\|=1 \nonumber
\end{align}
holds.
Therefore, the conditions ${\partial h_{ij}}/{\partial \bm{x}_{i}}\neq \bm{0}_{3\times 1}$ and ${\partial h_{ij}}/{\partial \bm{x}_{j}}\neq \bm{0}_{3\times 1}$ are always satisfied. 
\end{proof}

Theorem~\ref{th:valid_CBF} signifies there always exists the control input $\bm{u}_i,\bm{u}_j \in \R^3$ that renders the set $\hat{\mc S}_{ij}$ forward invariant.
Therefore, the collision is prevented if the state $\bm{x}_{ij}$ is in the safe set $\hat{\mc S}_{ij}$ at the initial time. 
Note that if two ellipses do not collide with each other at the initial time, we can easily specify the initial $\phi_{ij}$ that satisfies $h_{ij}>0$ by any heuristic approach.




We then extend the collision avoidance method \eqref{eq:QP_two} to the scenario with $n$ agents. Similar to the case of the two agents, we need to introduce a supporting line for each pair of agents, resulting in $_{n}C_{2}$ numbers of CBFs for the whole system. Since the agent with a smaller ID in a pair owns a supporting line, a vector augmenting $\phi_{ij}$ of agent $i$ is described as $\bm{\phi}_{i}=[\phi_{i\hspace{0.4mm}i+1},\phi_{i\hspace{0.4mm}i+2},...,\phi_{i\hspace{0.4mm}n}]^T$ with $\bm{\phi}_n=\emptyset$.
In addition, we introduce a vector combining $h_{ij}$ as $\bm{h}_{i}=[h_{i\hspace{0.4mm}i+1},h_{i\hspace{0.4mm}i+2},...,h_{i\hspace{0.4mm}n}]^T$.
Then, the ensemble state and CBFs of the system are expressed as $\bm{x}=\left[\bm{x}_1^T,\bm{x}_2^T,...,\bm{x}_n^T,\bm{\phi}_1^T,\bm{\phi}_2^T,...,\bm{\phi}_{n-1}^T\right]^T$ and $\bm{h}=\left[\bm{h}_{1}^T,\bm{h}_{2}^T,..,\bm{h}_{n-1}^T\right]^T$, respectively.
With the introduced ensemble vectors, the collision avoidance method for $n$ agents can be achieved by $\bm{u}^*$ derived from
\begin{subequations}
\begin{align}
    \bm{u}^* =& \argmin_{\bm{u}}\|\bm{u}-\bm{u}_{\mathrm{nom}}\|^2,\\
    &~\mbox{s.t. }L_f\bm{h}+L_g\bm{h}\bm{u}+\alpha(\bm{h})\geq 0,
\end{align}
\end{subequations}
where $f=\bm{0}_{(3n + _{n}C_{2}) \times 1}$ and $g_{ij}=I_{(3n + _{n}C_{2})}$ since we assume a single integrator model.

\section{Simulation Results} \label{sec:sim}

The proposed algorithm is implemented in simulations to verify that it guarantees collision avoidance between elliptical agents.

\subsection{Simulation with Two Agents}\label{subsec:two_simu}
We first demonstrate our proposed algorithm with the two elliptical agents, the sizes of which are characterized by $Q_1=\mathrm{diag}(0.4,0.2)$ for a red agent and $Q_2=\mathrm{diag}(0.6,0.2)$ for a blue agent, respectively. The initial condition of the simulation is depicted in Fig.~\ref{fig:simu1_state}(a), with the initial pose $x_{1}(0)=\left[0,1,{-\pi}/{4}\right]^T$,  $x_{2}(0)=\left[2,0.1,0\right]^T$.
Note that we randomly chose the initial $\phi_{12}(0)$ from the range that yields a supporting line $l_{12}$ separating two ellipses.
Two agents traverse the environment so that they intersect around the center of the field. We utilize $\alpha(h_{12})=10h_{12}$ for an extended class $\mathcal{K}$ function in CBF conditions and $u_{\mathrm{nom},\phi_{12}}=20\left({\partial h_{12}}/{\partial \phi_{12}}\right)$ for a nominal input to $\phi_{12}$.


The snapshots of the simulations are presented in Fig.~\ref{fig:simu1_state}, which illustrate a supporting line incorporated in agent~1's CBF as a black line. In addition, the distance between the supporting line and the blue ellipse is depicted as a green line. 
The snapshots demonstrate that the supporting line on agent $i$ is updated so that it separates two agents without causing any conservative transition in their evasive trajectories. The value of $h_{12}$ is illustrated in Fig.~\ref{fig:simu1_h_his} together with the optimal solution of \eqref{eq:primal}, namely the actual distance between two agents.
Although $w_{12}^*$ and $h_{12}$ differ at the initial time since we set $\phi_{12}$ randomly, the proposed gradient-based input \eqref{dphi} successfully drives $h_{12}$ to the actual distance $w_{12}^*$.
Furthermore, Fig.~\ref{fig:simu1_h_his} verifies that the input \eqref{dphi} rotates the supporting line so that $h_{12}$ follows the transition of the actual distance between two ellipses.
We can also confirm that the value of $h_{12}$ keeps in the positive value, hence, collision avoidance is achieved.

\begin{figure}[t!]
    \centering
    \subfloat[$t=0\,{\rm s}$]
    {\includegraphics[trim = 0.08cm 0.1cm 1.4cm 1.2cm, clip=true, width=0.4\columnwidth]{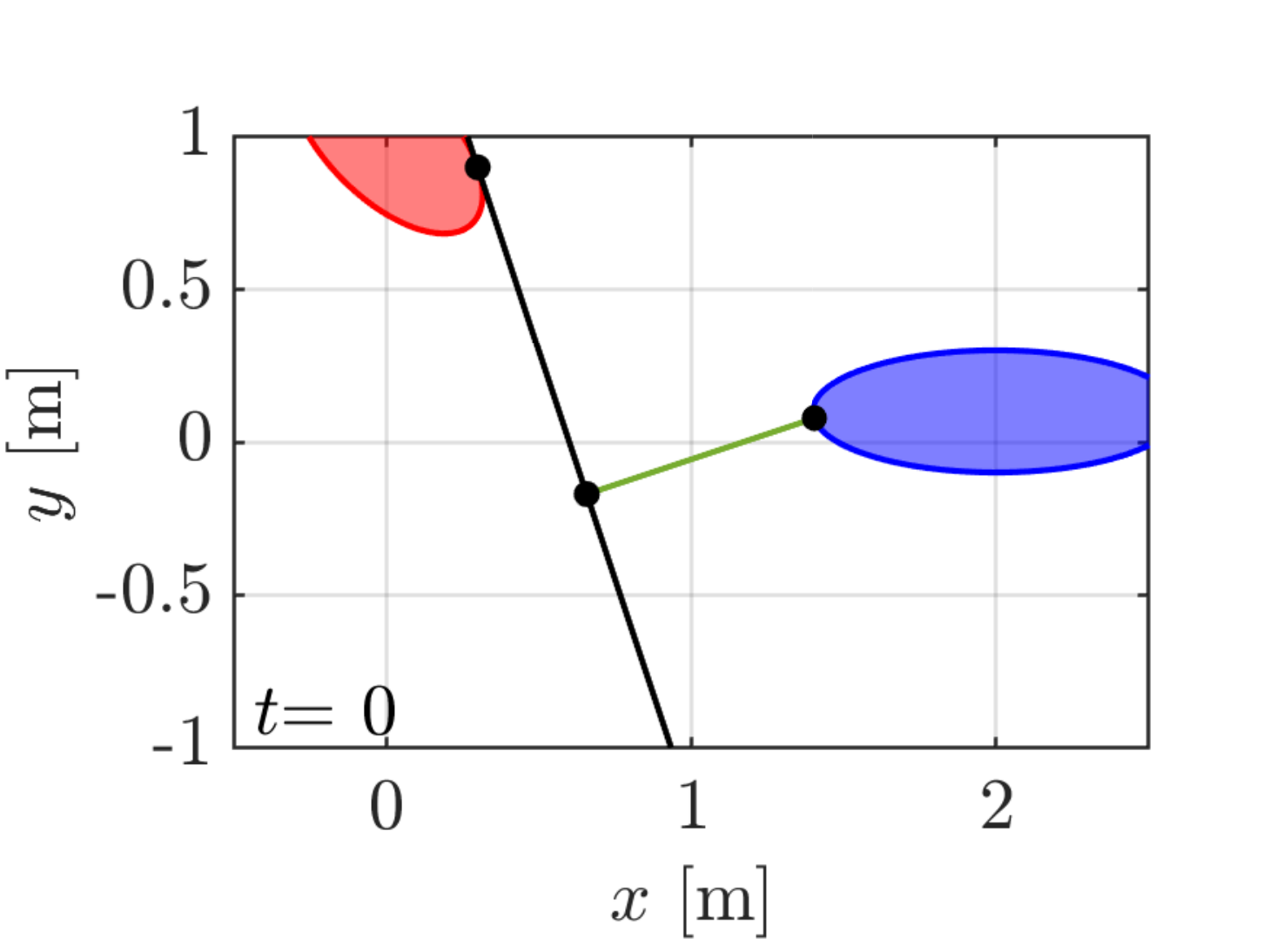}
    \label{subfig:simu1_state_1}} \quad 
    \subfloat[$t=0.8\,{\rm s}$]
    {\includegraphics[trim = 0.1cm 0cm 1cm 0.3cm, clip=true, width=0.4\columnwidth]{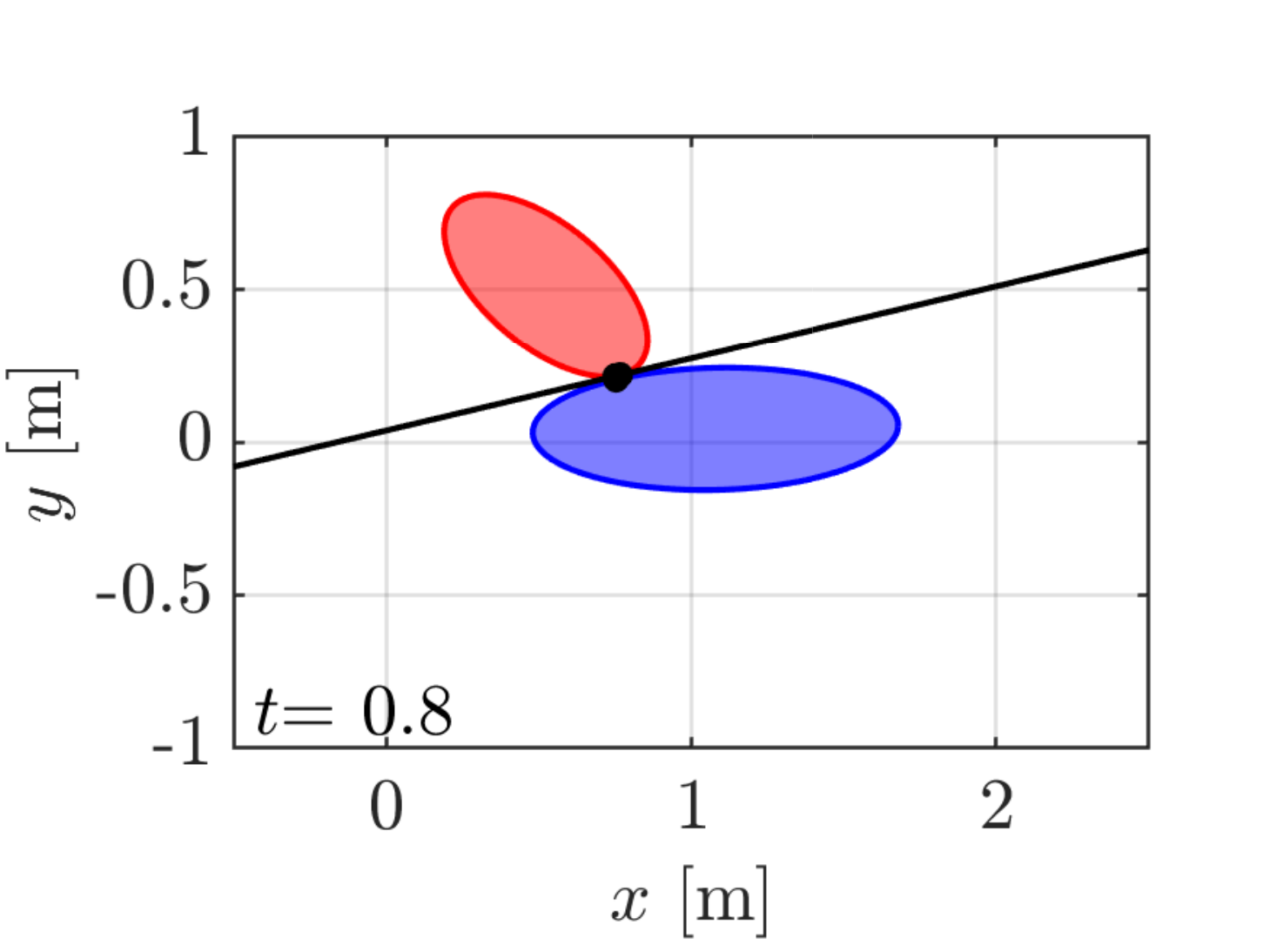}
    \label{subfig:simu1_state_2}} \\
    \subfloat[$t=2\,{\rm s}$]
    {\includegraphics[trim = 0.1cm 0cm 1cm 0.2cm, clip=true, width=0.4\columnwidth]{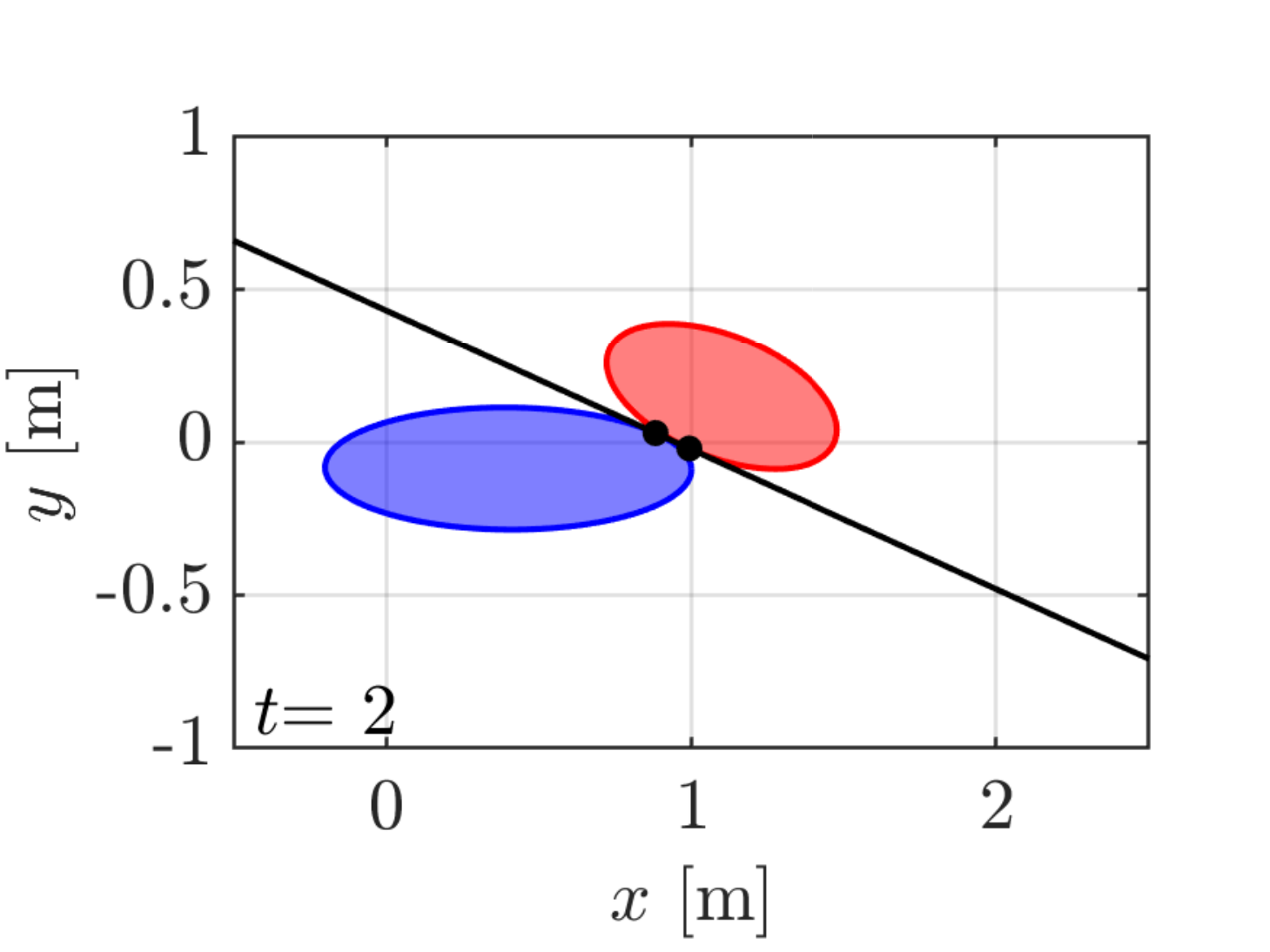}
    \label{subfig:simu1_state_3}} \quad 
    \subfloat[$t=4\,{\rm s}$]
    {\includegraphics[trim = 0.1cm 0cm 1cm 0.2cm, clip=true, width=0.4\columnwidth]{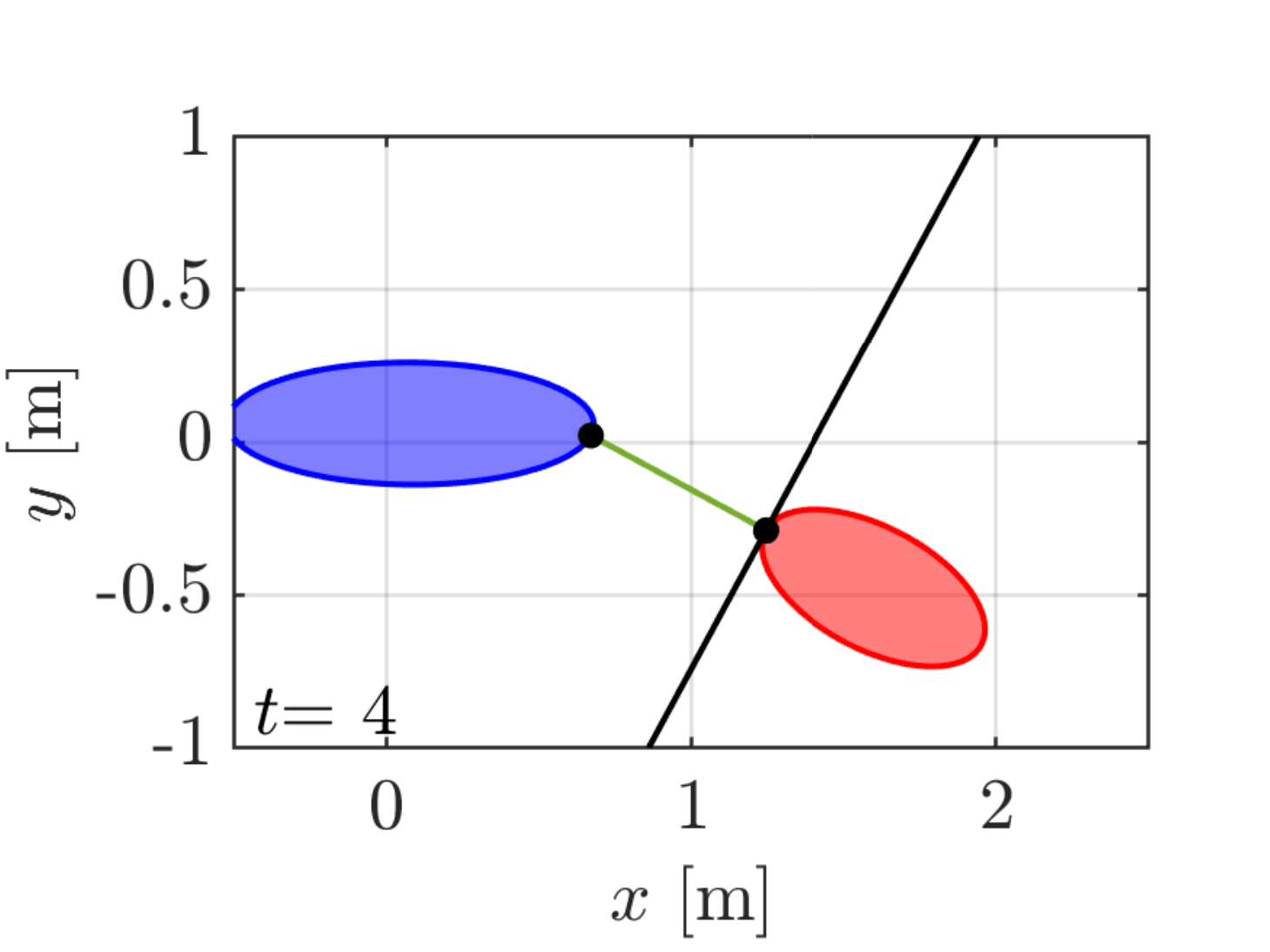}
    \label{subfig:simu1_state_4}} 
    \caption{Snapshots of the simulation, where two elliptical agent 1 and 2 are depicted in red and blue, respectively. The supporting line of agent~1 is rotated to maximize the distance, shown in the green line, between the line and agent 2.}
    \label{fig:simu1_state}
\end{figure}

\begin{figure}
    \centering
    \includegraphics[trim = 0cm 0.1cm 0cm 0.2cm, clip=true, width=45mm]{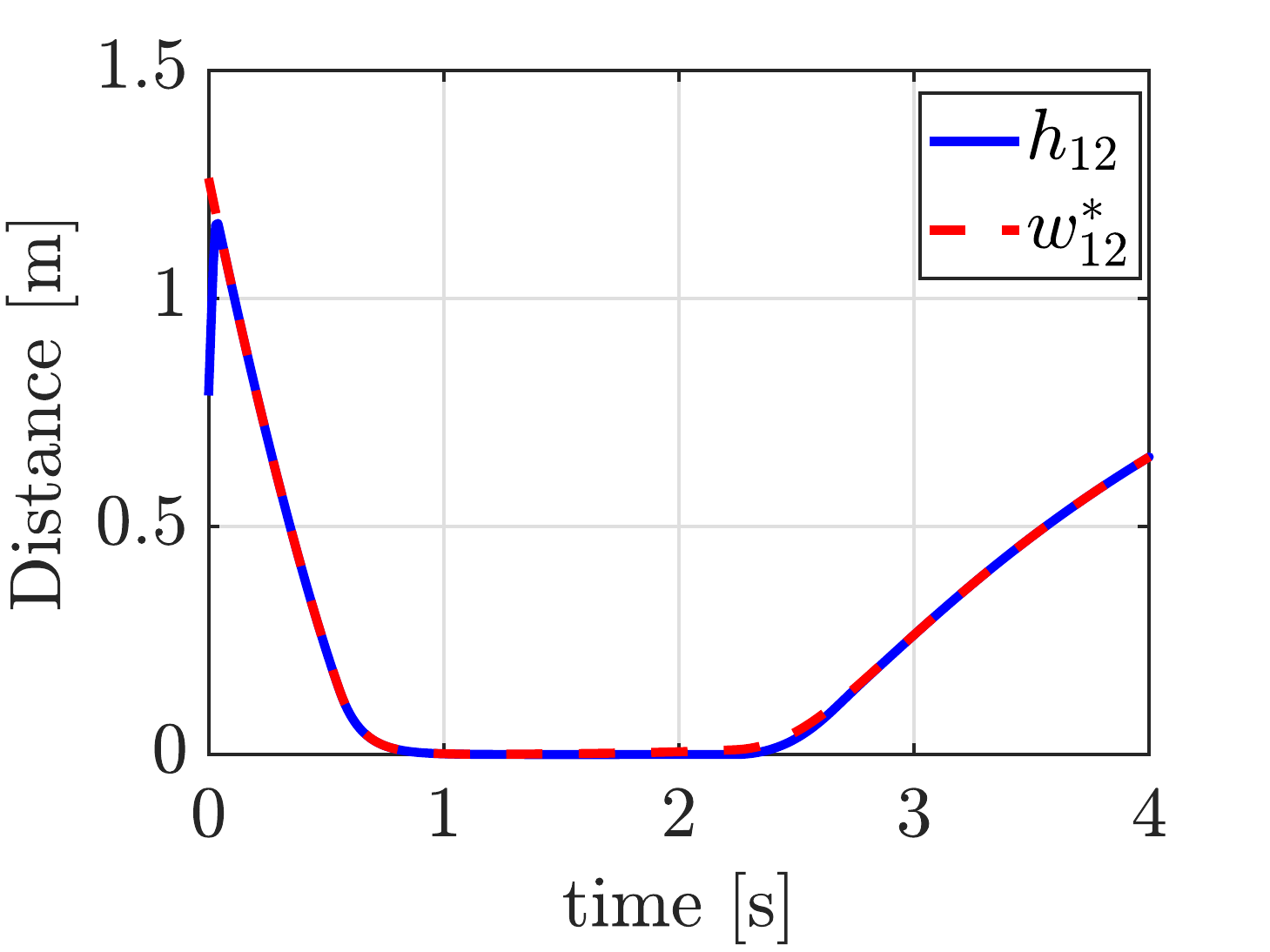}
    \caption{Evolution of $h_{12}(\phi_{12})$, shown in the blue line, and the actual distance $w_{12}^*$ between two elliptical agents, depicted as the red dashed line, calculated by the optimization problem~\eqref{eq:primal}. The control input \eqref{dphi} for $\phi_{12}$ makes $h_{12}(\phi_{12})$ follow $w_{12}^*$ while keeping the smaller value than $w_{12}^*$. Since the value of $h_{12}(\phi_{12})$ remains positive during the simulation, the collision between two elliptical agents is successfully avoided.}
    \label{fig:simu1_h_his}
\end{figure}

\subsection{Simulation with Four Agents}
In the next simulation, we demonstrate our proposed algorithm with the four elliptical agents, the sizes of which are specified by $Q_1=\mathrm{diag}(0.3,0.15)$ for a red agent, $Q_2=\mathrm{diag}(0.4,0.2)$ for a blue agent, $Q_3=\mathrm{diag}(0.4,0.2)$ for a green agent and $Q_4=\mathrm{diag}(0.6,0.3)$ for an orange agent, respectively. 
The initial condition of the simulation is depicted in Fig.~\ref{fig:simu2_state}(a), with the initial pose $x_{1}(0)=\left[-0.1,1.1,-{\pi}/{4}\right]^T$, $x_{2}(0)=\left[1.9,-1.1,-{\pi}/{4}\right]^T$, $x_{3}(0)=\left[-0.1,-1.1,{5\pi}/{4}\right]^T$ and $x_{4}(0)=\left[1.9,1.1,{5\pi}/{4}\right]^T$.
All four agents traverse the environment so that they intersect around the center of the field.
We utilize $\alpha(\bm{h})=10\bm{h}$ for an extended class $\mathcal{K}$ function in CBF conditions and $u_{\mathrm{nom},\phi_{ij}}=20\left({\partial h_{ij}}/{\partial \phi_{ij}}\right)$ for a nominal input to each $\phi_{ij}$.

The snapshots of the simulation are presented in Fig.~\ref{fig:simu2_state}, where each agent is depicted with its trajectory. As illustrated in Fig.~\ref{fig:simu2_state}(b), all agents move straightly toward their goal positions until their distances become closer at the center. Then, the proposed methods modify the nominal input minimally invasive way so that the collision is avoided, as illustrated in Fig.~\ref{fig:simu2_state}(c) and (d). Note that the proposed CBF achieves the collision-free trajectories while changing the attitude of the agents in Fig.~\ref{fig:simu2_state}(c).
Fig.~\ref{fig:simu2_h_his} depicts the transitions of $h_{ij}$, where the proposed methods keep them in the positive value.

\begin{figure}[t!]
    \centering
    \subfloat[$t=0\,{\rm s}$]
    {\includegraphics[trim = 0.75cm 0.1cm 1cm 0.8cm, clip=true, width=0.38\columnwidth]{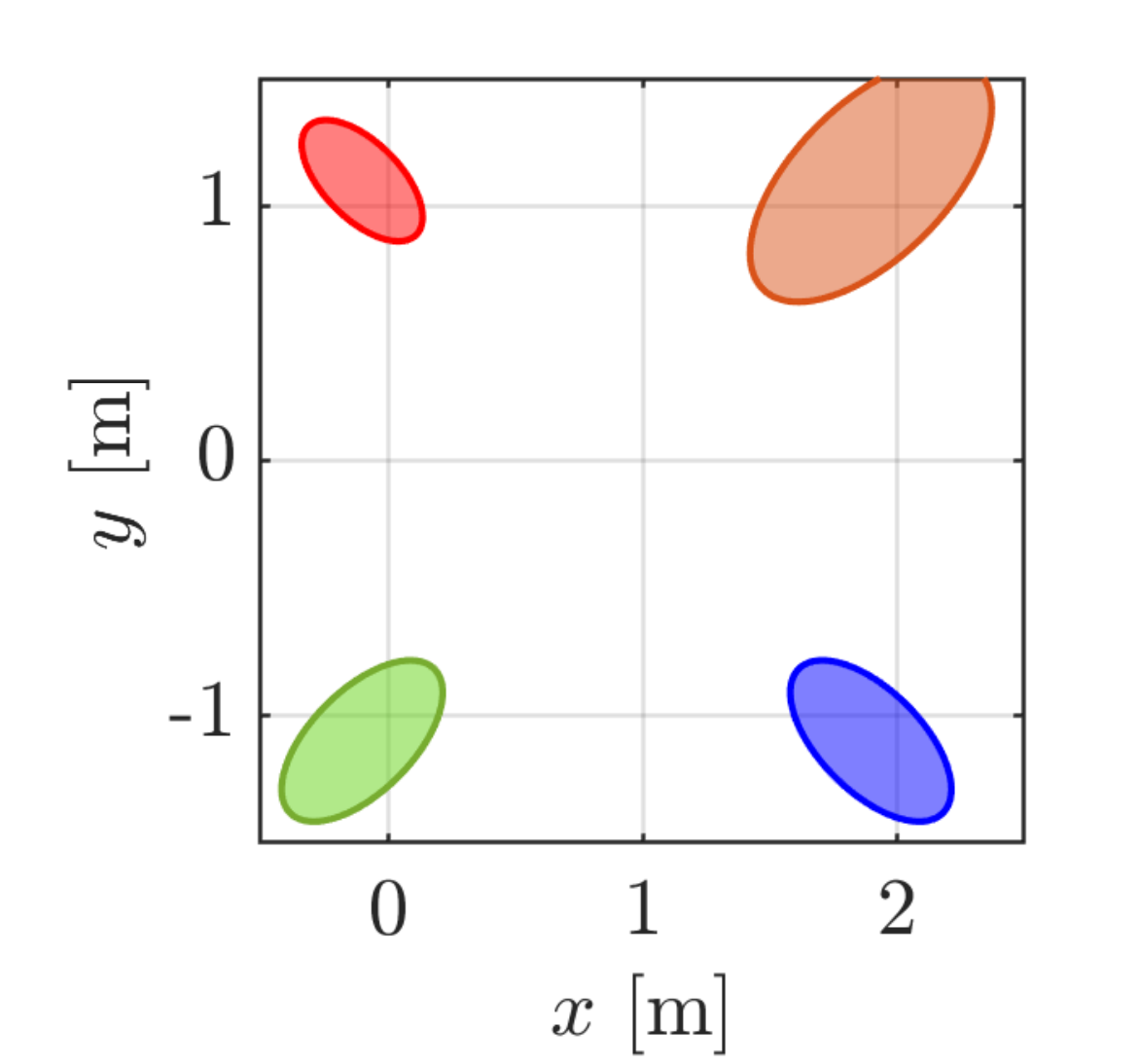}
    \label{subfig:simu2_state_1}} \quad 
    \subfloat[$t=1\,{\rm s}$]
    {\includegraphics[trim = 0.75cm 0.1cm 1cm 0.8cm, clip=true, width=0.38\columnwidth]{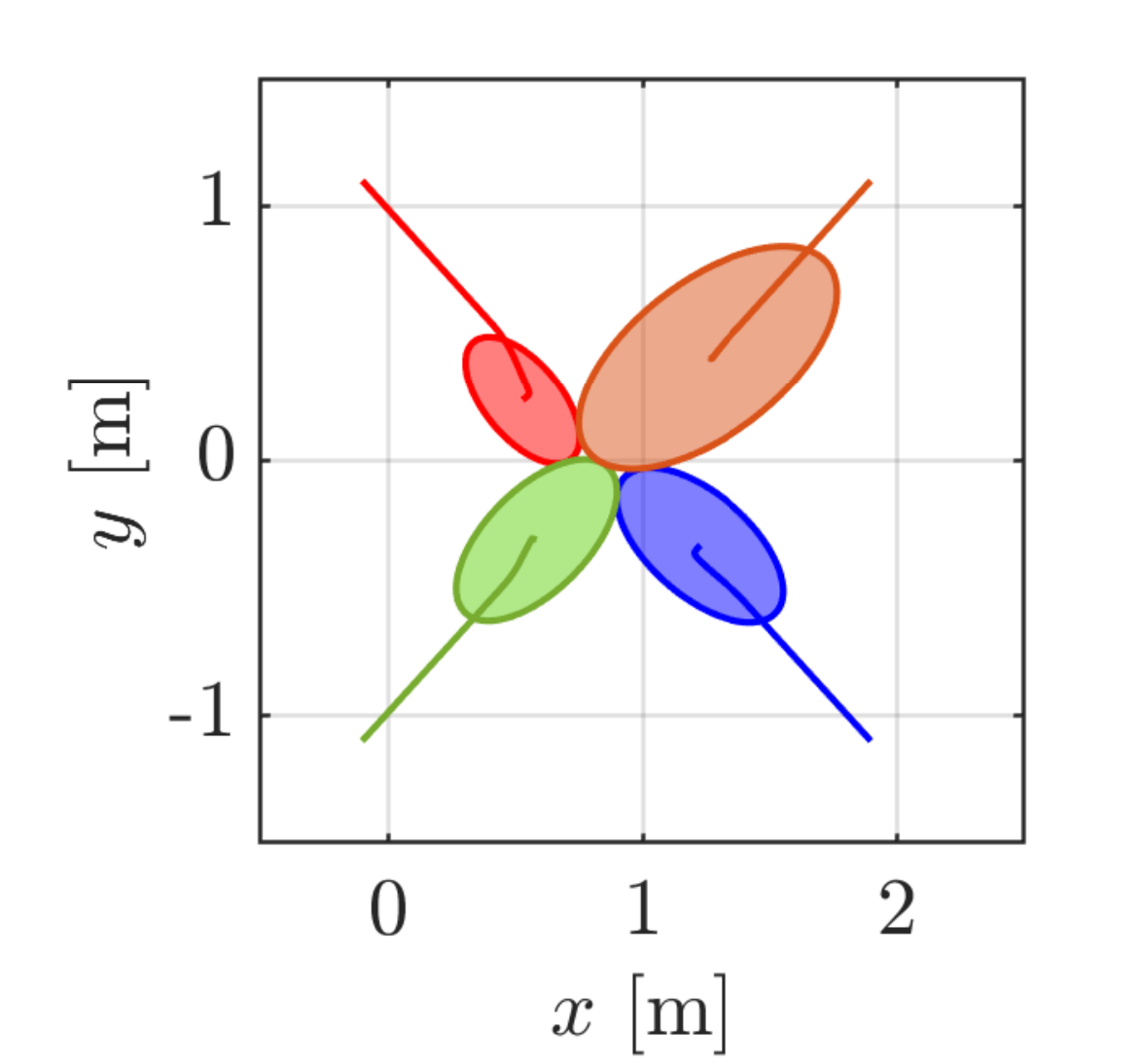}
    \label{subfig:simu2_state_2}} \\ \medskip
    \subfloat[$t=2\,{\rm s}$]
    {\includegraphics[trim = 0.75cm 0.1cm 1cm 0.8cm, clip=true, width=0.38\columnwidth]{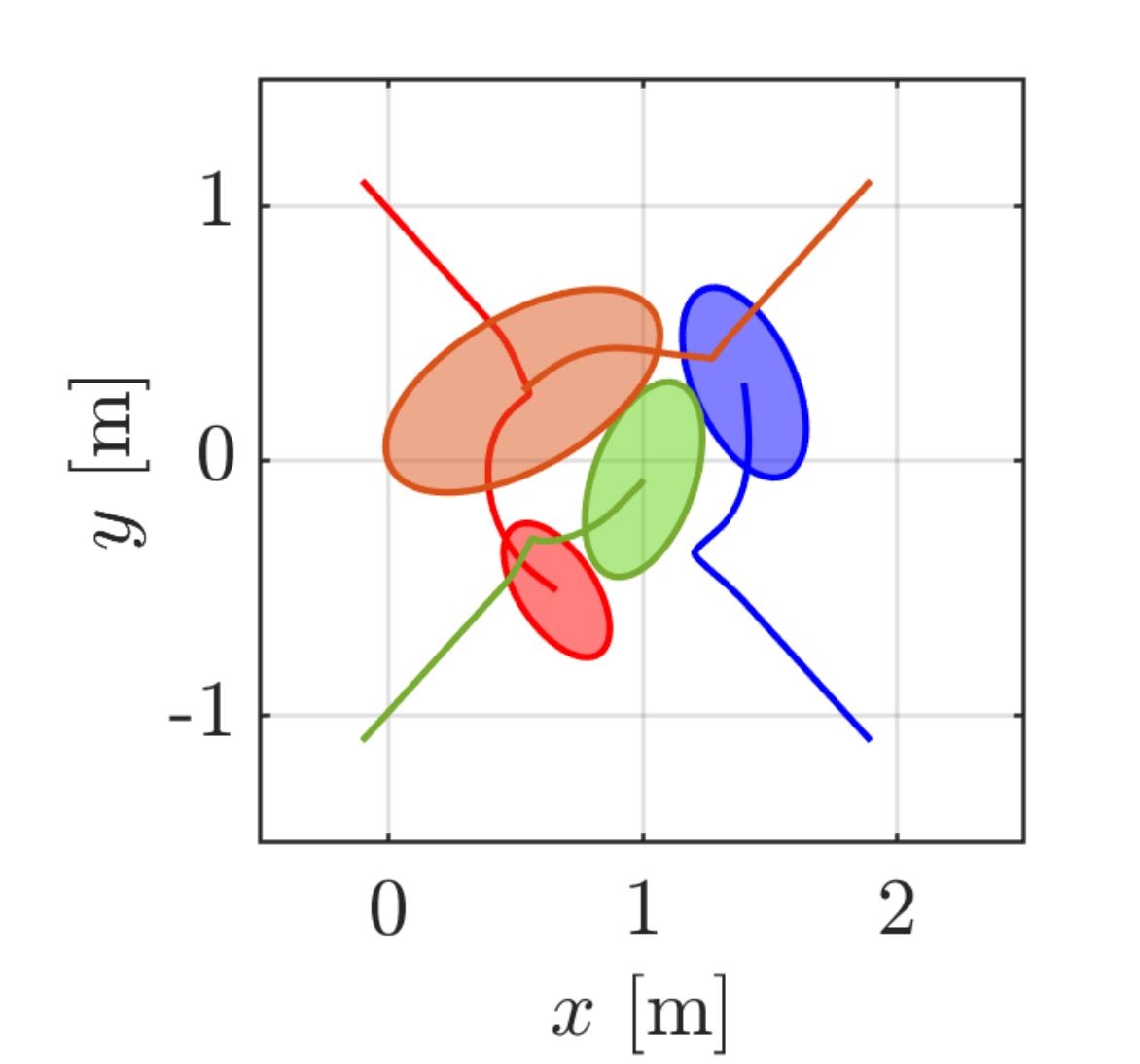}
    \label{subfig:simu2_state_3}} \quad 
    \subfloat[$t=4\,{\rm s}$]
    {\includegraphics[trim = 0.75cm 0.1cm 1cm 0.8cm, clip=true, width=0.38\columnwidth]{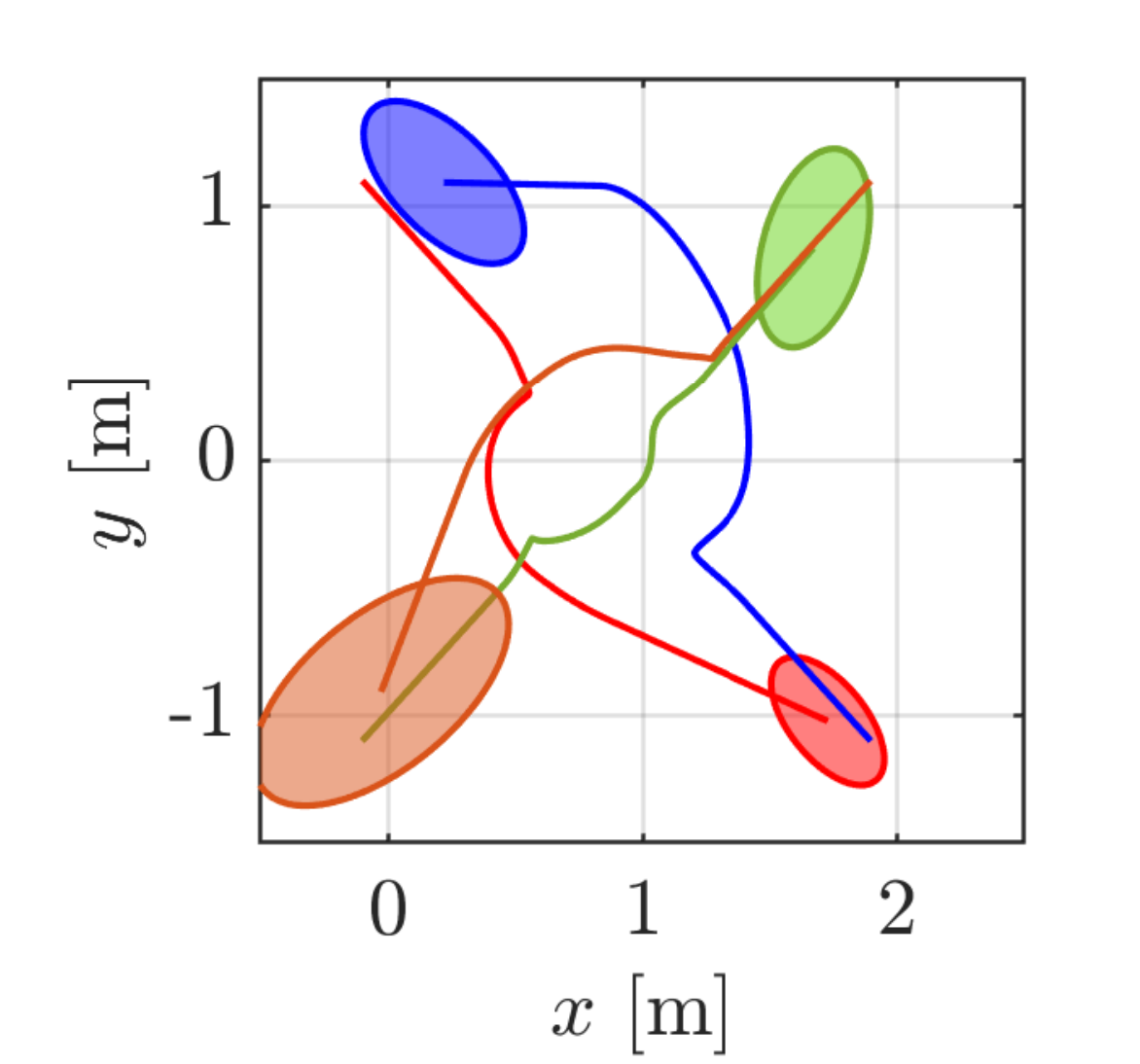}
    \label{subfig:simu_state_4}} 
    \caption{Snapshots of the simulation, where four elliptical agents are depicted together with their trajectories.}
    \label{fig:simu2_state}
\end{figure}

\begin{figure}
    \centering
    \includegraphics[trim = 0cm 0cm 0.5cm 0.3cm, clip=true, width=42mm]{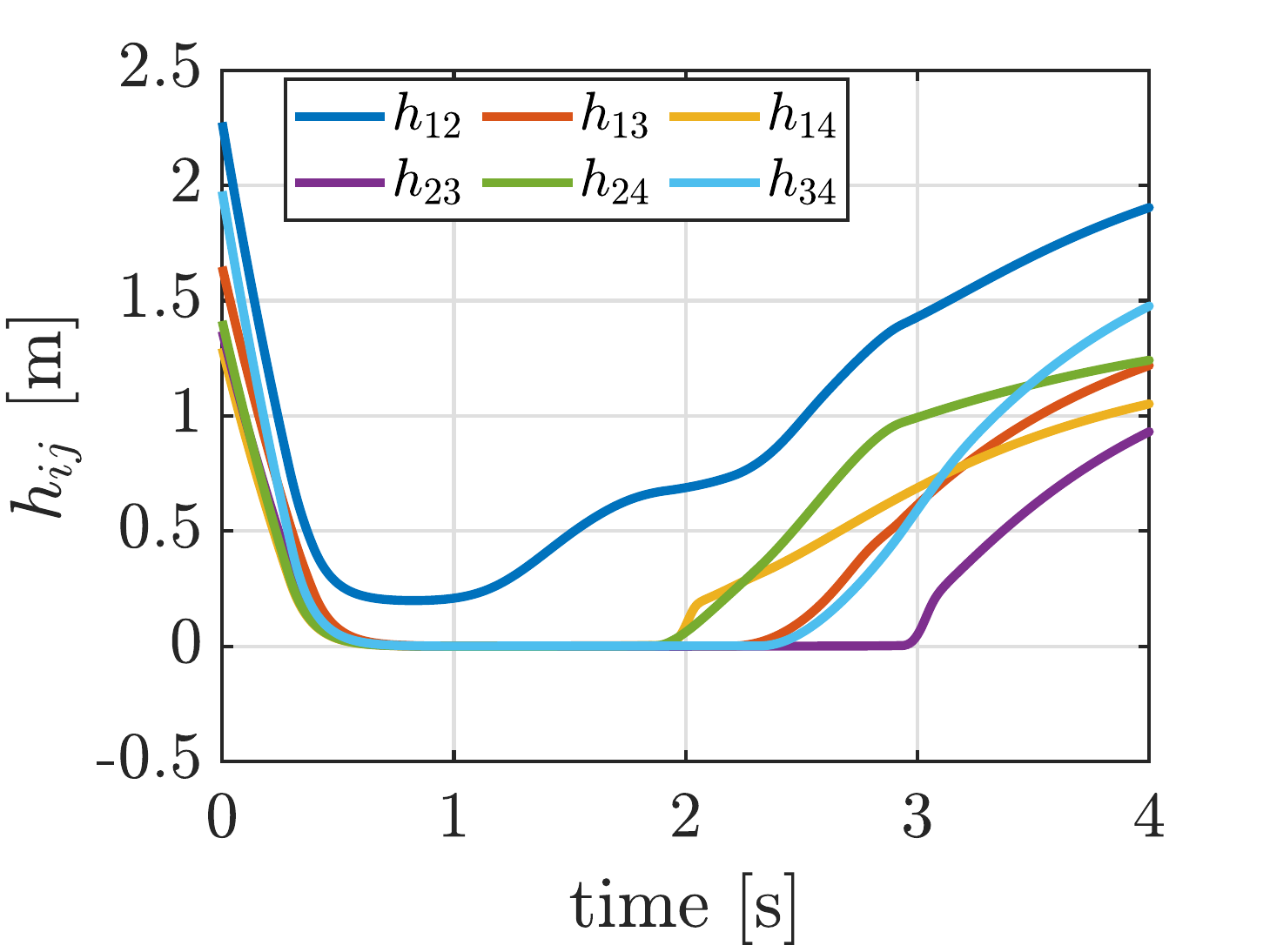}
    \caption{Evolution of $h_{ij},~\forall j \in \mc N\backslash i,~\forall i \in \mc N$. The proposed methods remains $h_{ij}$ in the positive value during the simulation.}
    \label{fig:simu2_h_his}
\end{figure}



\section{Conclusion}

In this paper, we proposed the collision avoidance method for elliptical agents that utilizes the novel CBF leveraging a supporting line between agents. We first introduced a supporting line of an agent to develop the separation condition of agents that is implementable as a CBF. However, we observed that a naive choice of a supporting line might render a shorter distance than the actual distance between two agents. To alleviate the conservativeness in this evaluation, we proposed the optimization problem that rotates the supporting line so that the distance between a supporting line and the other agent is maximized. We then proved that the maximum value derived from this optimization problem is equivalent to the actual distance between two agents. We presented the collision avoidance method incorporating the developed CBF together with the gradient ascent law for rotating the supporting line. Finally, numerical simulations showcased the validity of the proposed methods.
Future works include extending the proposed framework to nonlinear systems through exponential CBF \cite{Koushil2016} while embracing 3D ellipsoidal agents.



\section{Acknowledgement}
We acknowledge Mr. Shunya Yamashita at Tokyo Institute of Technology for helpful discussions on the manuscript.

\bibliographystyle{IEEEtran.bst}
\bibliography{biblio.bib}

\begin{thebibliography}{10}
\providecommand{\url}[1]{#1}
\csname url@rmstyle\endcsname
\providecommand{\newblock}{\relax}
\providecommand{\bibinfo}[2]{#2}
\providecommand\BIBentrySTDinterwordspacing{\spaceskip=0pt\relax}
\providecommand\BIBentryALTinterwordstretchfactor{4}
\providecommand\BIBentryALTinterwordspacing{\spaceskip=\fontdimen2\font plus
\BIBentryALTinterwordstretchfactor\fontdimen3\font minus
  \fontdimen4\font\relax}
\providecommand\BIBforeignlanguage[2]{{%
\expandafter\ifx\csname l@#1\endcsname\relax
\typeout{** WARNING: IEEEtran.bst: No hyphenation pattern has been}%
\typeout{** loaded for the language `#1'. Using the pattern for}%
\typeout{** the default language instead.}%
\else
\language=\csname l@#1\endcsname
\fi
#2}}

\bibitem{Arslan16}
O.~Arslan and D.~E. Koditschek, ``Voronoi-based coverage control of
  heterogeneous disk-shaped robots,'' in \emph{Proc. IEEE Int. Conf. Robotics
  and Automation}, 2016, pp. 4259--4266.

\bibitem{Funada20}
R.~Funada, M.~Santos, T.~Gencho, J.~Yamauchi, M.~Fujita, and M.~Egerstedt,
  ``Visual coverage maintenance for quadcopters using nonsmooth barrier
  functions,'' in \emph{Proc. IEEE Int. Conf. Robotics and Automation}, 2020,
  pp. 3255--3261.

\bibitem{Miyano20}
T.~Miyano, J.~Romberg, and M.~Egerstedt, ``Primal-dual gradient dynamics for
  cooperative unknown payload manipulation without communication,'' in
  \emph{Proc. American Control Conf.}, 2020, pp. 2061--2067.

\bibitem{Gong17}
G.-B. Dai and Y.-C. Liu, ``Distributed coordination and cooperation control for
  networked mobile manipulators,'' \emph{IEEE Trans. Industrial Electronics},
  vol.~64, no.~6, pp. 5065--5074, 2017.

\bibitem{Barbosa20}
F.~S. Barbosa, L.~Lindemann, D.~V. Dimarogonas, and J.~Tumova, ``Provably safe
  control of {L}agrangian systems in obstacle-scattered environments,'' in
  \emph{Proc. IEEE Conf. Decision and Control}, 2020, pp. 2056--2061.

\bibitem{Zhang12}
C.~Zhang and J.~M. Kovacs, ``The application of small unmanned aerial systems
  for precision agriculture: {A} review,'' \emph{Precision Agriculture},
  vol.~13, pp. 693--712, 2012.

\bibitem{hetero}
Y.~Rizk, M.~Awad, and E.~W. Tunstel, ``Cooperative heterogeneous multi-robot
  systems: A survey,'' \emph{ACM Comput. Surv.}, vol.~52, no.~2, 2019.

\bibitem{Hoy2015}
M.~Hoy, A.~S. Matveev, and A.~V. Savkin, ``Algorithms for collision-free
  navigation of mobile robots in complex cluttered environments: {A} survey,''
  \emph{Robotica}, vol.~33, no.~3, pp. 463--497, 2015.

\bibitem{Ames2019_CBF_thapp}
A.~D. Ames, S.~Coogan, M.~Egerstedt, G.~Notomista, K.~Sreenath, and P.~Tabuada,
  ``Control barrier functions: Theory and applications,'' in \emph{Proc.
  European Control Conf.}, 2019, pp. 3420--3431.

\bibitem{Khatib1986}
O.~Khatib, ``Real-time obstacle avoidance for manipulators and mobile robots,''
  in \emph{Autonomous Robot Vehicles}.\hskip 1em plus 0.5em minus 0.4em\relax
  "New York, NY: Springer New York, 1986, pp. 396--404.

\bibitem{Dimarogonas08}
D.~V. Dimarogonas and K.~H. Johansson, ``On the stability of distance-based
  formation control,'' in \emph{Proc. IEEE Conf. Decision and Control}, 2008,
  pp. 1200--1205.

\bibitem{Olfari-saber06}
R.~Olfati-Saber, ``Flocking for multi-agent dynamic systems: {A}lgorithms and
  theory,'' \emph{IEEE Trans. Automatic Control}, vol.~51, no.~3, pp. 401--420,
  2006.

\bibitem{Gazi04}
V.~Gazi and K.~M. Passino, ``A class of attractions/repulsion functions for
  stable swarm aggregations,'' \emph{Int. J. Control}, vol.~77, no.~18, pp.
  1567--1579, 2004.

\bibitem{Stipanovic07}
D.~M. Stipanović, P.~F. Hokayem, M.~W. Spong, and D.~D. Šiljak,
  ``{Cooperative avoidance control for multiagent systems},'' \emph{J. Dynamic
  Syst., Measurement, and Control}, vol. 129, no.~5, pp. 699--707, 2007.

\bibitem{Ames2017}
A.~D. Ames, X.~Xu, J.~W. Grizzle, and P.~Tabuada, ``Control barrier function
  based quadratic programs for safety critical systems,'' \emph{IEEE Trans.
  Automatic Control}, vol.~62, no.~8, pp. 3861--3876, 2017.

\bibitem{Wang2017}
L.~Wang, A.~D. Ames, and M.~Egerstedt, ``Safety barrier certificates for
  collisions-free multirobot systems,'' \emph{IEEE Trans. Robotics}, vol.~33,
  no.~3, pp. 661--674, 2017.

\bibitem{Glotfelter19}
P.~Glotfelter, I.~Buckley, and M.~Egerstedt, ``Hybrid nonsmooth barrier
  functions with applications to provably safe and composable collision
  avoidance for robotic systems,'' \emph{IEEE Robotics and Automation Letters},
  vol.~4, no.~2, pp. 1303--1310, 2019.

\bibitem{Singletary2020}
A.~Singletary, K.~Klingebiel, J.~Bourne, A.~Browning, P.~Tokumaru, and A.~Ames,
  ``Comparative analysis of control barrier functions and artificial potential
  fields for obstacle avoidance,'' in \emph{Proc. IEEE/RSJ Int. Conf.
  Intelligent Robots and Syst.}, 2021, pp. 8129--8136.

\bibitem{Glotfelter21}
P.~Glotfelter, J.~Cortés, and M.~Egerstedt, ``A nonsmooth approach to
  controller synthesis for boolean specifications,'' \emph{IEEE Trans.
  Automatic Control}, vol.~66, no.~11, pp. 5160--5174, 2021.

\bibitem{Gennaro20}
G.~Notomista and M.~Egerstedt, ``Persistification of robotic tasks,''
  \emph{IEEE Trans. Control Syst. Technol.}, vol.~29, no.~2, pp. 756--767,
  2021.

\bibitem{Do2013}
K.~D. Do, ``Flocking for multiple ellipsoidal agents with limited communication
  ranges,'' \emph{Int. Scholarly Research Notices}, vol. 2013, p.~13, 2013.

\bibitem{Dimarogonas2019}
C.~K. Verginis and D.~V. Dimarogonas, ``Closed-form barrier functions for
  multi-agent ellipsoidal systems with uncertain {L}agrangian dynamics,''
  \emph{IEEE Control Syst. Letters}, vol.~3, no.~3, pp. 727--732, 2019.

\bibitem{Srinivasan21}
M.~Srinivasan, M.~Abate, G.~Nilsson, and S.~Coogan, ``Extent-compatible control
  barrier functions,'' \emph{Syst. Control Letters}, vol. 150, p. 104895, 2021.

\bibitem{Wang2001}
W.~Wang, J.~Wang, and M.-S. Kim, ``An algebraic condition for the separation of
  two ellipsoids,'' \emph{Computer Aided Geometric Design}, vol.~18, no.~6, pp.
  531--539, 2001.

\bibitem{Choi2020}
M.~G. Choi, ``Computing the closest approach distance of two ellipsoids,''
  \emph{Symmetry}, vol.~12, no.~8, p. 1302, 2020.

\bibitem{convex_optimization}
S.~Boyd and L.~Vandenberghe, \emph{Convex Optimization}.\hskip 1em plus 0.5em
  minus 0.4em\relax {Cambridge University Press}, 2004.

\bibitem{Koushil2016}
Q.~Nguyen and K.~Sreenath, ``Exponential control barrier functions for
  enforcing high relative-degree safety-critical constraints,'' in \emph{Proc.
  American Control Conf.}, 2016, pp. 322--328.

\end{thebibliography}
\end{document}